\newtheorem{lem}{Lemma}
\newcommand{\E}[1]{\textsf{E}\left[{#1}\right]}   
\newcommand{\Ep}[2]{\textsf{E}_{#1}\left[{#2}\right]}   
\newcommand{\set}[1]{\mathcal{#1}}
\newcommand{\eqdef}{\; \hat{=} \;}
\newtheorem{theorem}{Theorem}
\newtheorem*{theorem*}{Theorem}
\begin{document}
\title{\Large\bfseries Optimal Training for Non-Feedback Adaptive
  Modulation over Rayleigh Fading Channels}

\author{%
  Khalid Zeineddine$^1$, Hussein Hammoud$^2$, Ibrahim
  Abou-Faycal$^2$\thanks{This work was performed at AUB and supported
    by AUB's University Research Board.}  \thanks{Partial results of
    this study were presented at the IEEE ISSPIT'09, Dec. 2009.}
  \\
  $^1$ {\small Department of Electrical Engineering \& Computer
    Science, Northwestern University, USA}
  \\
  $^2$ {\small Department of Electrical \& Computer Engineering,
    American University of Beirut, Lebanon}
  \\
  {\small {\tt khalidzeineddine2015@u.northwestern.edu,
      \{hmh52, ia14\}@aub.edu.lb}} }

\maketitle

\begin{abstract}

  
  Time-varying fading channels present a major challenge in the design
  of wireless communication systems. Adaptive schemes are often
  employed to adapt the transmission parameters to receiver-based
  estimates of the quality of the channel.
  We consider a pilot-based adaptive modulation scheme without the use
  of a feedback link. In this scheme, pilot tones (known by sender and
  receiver) are periodically sent through the channel for the purpose
  of channel estimation and coherent demodulation of data symbols at
  the receiver. We optimize the duration and power allocation of these
  pilot symbols to maximize the information-theoretic achievable rates
  using binary signaling. We analyze four transmission policies and
  numerically show how optimal training in terms of duration and power
  allocation varies with the channel conditions and from one
  transmission policy to another. We prove that for a causal
  estimation scheme with flexible power allocation, placing all the
  available power on one pilot is optimal.
 
\end{abstract}

\begin{keywords}
  Adaptive modulation, pilot symbol assisted modulation, fading
  channels, Rayleigh fading, power allocation, training duration.
\end{keywords}


\section{Introduction}

In digital mobile communications, fast fading degrades the Bit Error
Rate (BER) of the channel and inhibits coherent
detection\footnote{Coherent demodulation requires the extraction of a
  reliable phase reference from the received signal.}.  It is known
that performance is limited by channel estimation errors~\cite
{chester1, chester2, chester3, chester4}. Pilot Symbol Assisted
Modulation (PSAM) is a technique that has been introduced
in~\cite{Cav91} to mitigate these effects. In this scheme, known
training symbols (pilots) are periodically inserted into the data
frame for the purpose of channel estimation and coherent demodulation
of the data symbols.

Furthermore, channel-adaptive modulation dynamically adjusts certain
transmission parameters such as the constellation size, transmitted
power, and code rate according to the channel quality. Adaptive
signaling provides in general higher bit rates (relative to
conventional nonadaptive methods) by increasing the transmission
throughput under favorable channel conditions and reducing it as the
channel condition is degraded.



Some of the previous adaptive schemes rely on a channel-feedback link
to provide the transmitter with the Channel Side Information
(CSI)~\cite{Gold97, Cai05}. In~\cite{ref1}, the authors consider
employment of adaptive modulation with one pilot in addition to
delayed feedback to the transmitter and prove that power adaptation
via periodic feedback can increase the achievable rates. Similarly,
in~\cite{ref3}, authors consider pilot-based adaptive modulation where
estimate is fed back to transmitter in order to adapt data and pilot
power and study the optimal policy for power allocation for data and
pilot symbols. Authors in \cite{ref18} discuss adaptive modulation
with feedback and develop an adaptive scheme that accounts for both
channel estimation and prediction errors in order to meet a target Bit
Error Rate (BER). In \cite{ref16}, the authors attempt to optimize the
spectral efficiency subject to a specific BER constraint in a
pilot-based adaptive modulation setup with feedback. The above
mentioned works study the performance of such systems and prove
adaptive modulation using pilots can increase the achievable rates in
general. However, systems that rely on a channel-feedback link present
some disadvantages because of the modeling complexity on one hand and
its infeasibility on the other hand when the channel is fading faster
than it can be estimated (or predicted) and fed back to the sender.
Optimizing the pilot placement, power allocation and modulation
schemes in a pilot-based setup is an active area of research, whether
in the case of a single receiver~\cite{ref1, ref3, ref4, ref6, ref9,
  ref10, ref11, ref13, ref18} or multiple receivers~\cite{ref13,
  ref15, ref19}.

A modified pilot-based adaptive modulation scheme over Rayleigh fading
channels was presented in~\cite{AMM05}. This scheme adapts the coded
modulation strategy at the sender to the quality of the channel
estimation (estimation error variance) at the receiver \textit{without
  requiring any channel feedback}. In this work we study the
performance of this non-feedback adaptive modulation scheme over
time-varying Rayleigh fading channels. Unlike the scheme
in~\cite{AMM05}, we consecutively send a cluster of $k$ pilots ($k
\geq 1$) per data frame with $k$ being an optimization
variable~\cite{ZAF09}. We determine the optimal duration and power
allocation of the training period under different transmission
policies for both causal and non-causal estimation.
We study such systems at low Signal-to-Noise-Ratio (SNR) (we consider
the received SNR) levels and the performance is measured in terms of
achievable rates using binary signaling. We prove that the ``optimal''
power allocation scheme which minimizes the error variance of the
estimates of the channel parameters --\textit{which is set up offline
  without requiring feedback}-- in case of causal estimation is the
one in which all the available power is allocated on one pilot, if
constraints allow it.


The organization of this paper is as follows. In Section~\ref{sc:pre},
we present the fading channel model, the adaptive transmission
technique we use to transmit over the channel as well as the receiver
details. The measure of performance is discussed in
Section~\ref{sc:MI}, the optimal power allocation for causal
estimation is proved in Section~\ref{sc:OPT} and the numerical results
are presented in Section~\ref{sc:NR}. In Section~\ref{sc:Ext}, we
present possible extensions to other fading models and
Section~\ref{sc:conc} concludes the paper.

\section{Preliminaries}
\label{sc:pre}

\subsection{The Channel Model}

Consider the single-user discrete-time model for the Rayleigh fading
channel,
\begin{equation*}
  Y_{i} = R_{i}X_{i}+N_{i},
\end{equation*}
where $i$ is the time index, $X_{i} \in \mathbb{C}$ is the channel
input at time $i$, $Y_{i} \in \mathbb{C}$ is its output, and $R_{i}$
and $N_{i}$ are independent complex circular Gaussians\footnote{A
  complex Gaussian random variable is circular if and only if it is
  zero-mean and its real and imaginary parts are independent with
  equal variances.}  random variables with zero mean and variance
$\sigma _{R}^{2}$ and $\sigma _{N}^{2}$ respectively.  The amplitude
of the fading coefficient $R_{i}$ is then Rayleigh distributed and its
phase is uniform over $[-\pi,\,\pi)$. To account for power
constraints, the input is subject to
\begin{equation*}
  \E{ \left\vert X_{i}\right\vert ^{2} } \leq P_i,
\end{equation*}
for some parameters $\{P_i\}$ --that could be all equal to a constant
for example.  Since from an information theoretic perspective scaling
the output by $1 / \sigma_R$ does not change the mutual information,
we assume without loss of generality that $\sigma_R = 1$. The variance
of the noise $\sigma_{N}^2$ is to be generally interpreted as
$(\sigma_{N}^2 / \sigma_{R}^2)$.


We assume in this study that the fading process follows a stationary
first-order Gauss-Markov model introduced in~\cite{Med00}, i.e.,
\begin{equation}
  R_{i} = \alpha R_{i-1}+Z_{i},
  \label{eqn:model}
\end{equation}
where the samples $\{ Z_{i}\}$ are Independent and
Identically-Distributed (IID) complex circular Gaussians with mean
zero and variance equal to $\sigma_{Z}^{2} = \left( 1-\alpha^{2}
\right)$ such that $\alpha \in \left[0,1\right)$ to guarantee
stationarity.

Even though we analyze the benefits of pilot clustering by assuming
that the autocorrelation function of the fading process is derived
from a stationary first-order Gauss-Markov model~(\ref{eqn:model}), we
argue in Section~\ref{sc:Ext} that the methodology may be readily
adapted to other models and present the case of a Jakes' model
\cite{JakesBook} that takes into account higher orders of correlation.

\subsection{The Adaptive Transmission Scheme}

At regular intervals, the transmitter successively sends $k$ known
pilot symbols whose purpose is to enable the estimation of the channel
at the receiver.  The channel estimation is solely based on the pilot
symbols and no data-directed estimation is used. For each time sample
$i$, the receiver computes the Minimum Mean-Square Estimate (MMSE) of
the channel, the quality of which --measured through the estimate
error variance-- depends on its position with respect to the pilot
symbols. After estimation, the channel, as seen by the receiver, is a
Rician channel whose specular part is given by the estimate and whose
Rayleigh component is given by the zero-mean Gaussian-distributed
estimation error.

Although the scheme is adaptive, it {\em does not use feedback\/} to
determine its policy. The key idea is that the transmitter adapts to
the quality of channel estimation (specifically to the mean-square
error which is independent of the value of the estimate available only
at the receiver) rather than the estimate of the channel. Since the
estimation error variance is computed offline, the adaptive
transmission scheme can then be determined offline as well and adopted
by the transmitter. Even though three is no feedback to the
transmitter, it is aware of the statistics of the estimation error
beforehand.

The transmitter employs multiple codebooks in an interleaved fashion
as shown in Figure~\ref{fig:figInter}.  It adapts its throughput to
the estimation error variance by coding the data symbols according to
their distance from the training pilots. Symbols that are far away
from the pilots encounter poorer channel estimates at the receiver and
are therefore coded with lower rate codes, while closer symbols
benefit from small estimation error variance and are coded with higher
rate codes.

\begin{figure}[!ht]
  \begin{center}
    \setlength{\unitlength}{0.78cm}
    \begin{picture}(11.5,4.55)(0,0)

  \put(0,0){\vector(1,0){11.5}}

  \put(0,0){\vector(0,1){1}}
  \put(0.5,0){\vector(0,1){1}}
  \put(1,0){\vector(0,1){1}}

  \put(1.5,0){\circle*{0.1}}
  \put(1.5,0.2){\makebox(0,0)[b]{$\$$}}
  \put(2,0){\circle*{0.1}}
  \put(2,0.2){\makebox(0,0)[b]{$\%$}}
  \put(2.5,0){\circle*{0.1}}
  \put(3,0){\circle*{0.1}}

  \put(3.5,0){\circle*{0.1}}
  \put(4,0){\circle*{0.1}}
  \put(4.5,0){\circle*{0.1}}
  \put(5,0){\circle*{0.1}}

  \put(5.5,0){\circle*{0.1}}
  \put(6,0){\circle*{0.1}}
  \put(6.5,0){\circle*{0.1}}
  \put(7,0){\circle*{0.1}}
  \put(7,0.2){\makebox(0,0)[b]{$\&$}}

  \put(7.5,0){\vector(0,1){1}}
  \put(8,0){\vector(0,1){1}}
  \put(8.5,0){\vector(0,1){1}}
  \put(9,0){\circle*{0.1}}
  \put(9,0.2){\makebox(0,0)[b]{$\$$}}
  \put(9.5,0){\circle*{0.1}}
  \put(9.5,0.2){\makebox(0,0)[b]{$\%$}}
  \put(10,0){\circle*{0.1}}
  \put(10.5,0){\circle*{0.1}}

  \put(0,4.3){\makebox(0,0)[bl]{\small Codebook 1}}
  \put(0,3.7){\makebox(0,0)[bl]{\small Rate $R_1$}}
  \put(0,2.5){\framebox(3,1)[l]{$\, \$ \, \$ \, \$ \, \cdots$ }}
  \put(4,4.3){\makebox(0,0)[bl]{\small Codebook 2}}
  \put(4,3.7){\makebox(0,0)[bl]{\small Rate $R_2$}}
  \put(4,2.5){\framebox(3,1)[l]{$\, \% \, \% \, \% \, \cdots$ }}
  \dashline{0.1}(7.5,3)(8,3)
  \put(8.5,4.3){\makebox(0,0)[bl]{\small Codebook $m$}}
  \put(8.5,3.7){\makebox(0,0)[bl]{\small Rate $R_m$}}
  \put(8.5,2.5){\framebox(3,1)[l]{$\, \& \, \& \, \& \, \cdots$ }}

  \dashline{0.15}(0.25,2.7)(1.4,0.6)
  \dashline{0.15}(0.55,2.7)(8.9,0.6)

  \dashline{0.15}(4.25,2.7)(2,0.6)
  \dashline{0.15}(4.8,2.7)(9.4,0.6)

  \dashline{0.15}(8.7,2.7)(7,0.6)

\end{picture}
    \caption{Multiple Codebook Interleaving
      \label{fig:figInter}}
  \end{center}
\end{figure}
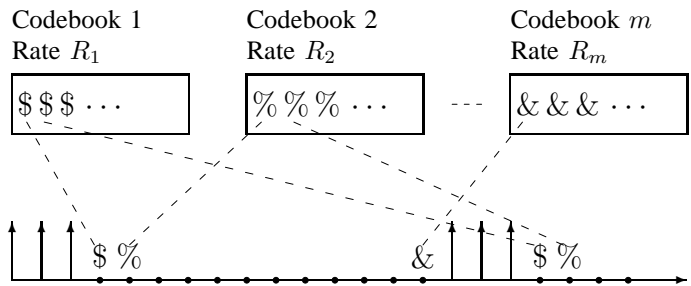

We only consider binary signaling. The motivation for this choice is
multiple folds. First, in~\cite{ATS01} the authors prove that for
discrete-time memoryless Rayleigh fading channels subject to average
power constraints, the capacity achieving distribution is discrete
with a finite number of mass points. Moreover, a binary distribution
was found to be optimal at low and moderate values of SNR~\cite{ATS01,
  Gall87, Verdu90}. Second, for a memoryless Rician fading channel,
Luo~\cite{Cheng06} established a similar result that, combined with
Gallager's in~\cite{Gall87}, implies that the binary input distribution
is asymptotically optimal at low SNR~\cite{Cheng06}.  Consequently, we
choose the alphabet of every codebook to consist in general of two
symbols:
\begin{equation*}
  \left \{ \begin{array}{llll}
      m_{1}& = a_{1}+jb_{1} & \text{\small with  
        probability} & p_1 \\ 
      m_{2}& = a_{2}+jb_{2} & \text{\small with 
        probability} & p_2 = (1-p_1).
    \end{array}\right. 
\end{equation*}

The rate of the codebooks is adjusted by modifying the probability
distribution of the mass points. Numerical results in~\cite{AMM05,
  ref4} indicate that the optimal mass points always lie between the
extremes of on-off keying (optimal for the IID Rayleigh fading case
where no CSI is available at the receiver) and the antipodal signaling
(optimal for a perfectly known channel). It is worth noting that some
of the work in the literature consider these two extremes for
designing the constellation mapping and try to optimize the
transmission model in the case of imperfect CSI based on the SNR
level~\cite {ref4}.  Moreover, any rotational transformation of the
two mass points will not affect the mutual information~\cite{ATS01,
  Cheng06}. Therefore an optimal input distribution consists of two
mass points $m_{1}, m_{2} \in\mathbb{R^*}$ with $-\sqrt{P}\leq m_1<0$
and $m_2\geq \sqrt{P}$.

\subsection{Channel Estimation at the Receiver}

Given a pilot spacing interval $T$, we send $k$ pilots in the
beginning of every data frame as shown in
Figure~\ref{fig:figKalEst}. When transmitting a pilot at time index
$i$, the input of the channel is $\sqrt{P_i}$ and its output is,
\begin{equation*}
  Y_{i} = \sqrt{P_i} R_{i} + N_{i}, \qquad i = 0, \cdots, k-1.
\end{equation*}

\begin{figure}[!ht]
  \begin{center}
    \setlength{\unitlength}{0.73cm}
    \begin{picture}(11.5,5)(0,-2.5)

  \put(0,0){\vector(1,0){11.5}}

  \put(0,0){\vector(0,1){2}}
  \put(0,-0.2){\makebox(0,0)[t]{0}}
  \put(0.5,0){\vector(0,1){1}}
  \put(0.5,-0.2){\makebox(0,0)[t]{1}}
  \dashline{0.1}(0.7,0.5)(1.3,0.5)
  \put(1.5,0){\vector(0,1){1.5}}

  \put(0,2.2){\makebox(0,0)[b]{$\sqrt{P_0}$}}
  \put(1.7,1.7){\makebox(0,0)[b]{$\sqrt{P_{k-1}}$}}

  \put(2,0){\circle*{0.1}}
  \put(2,-0.2){\makebox(0,0)[t]{$k$}}
  \put(2.5,0){\circle*{0.1}}
  \put(3,0){\circle*{0.1}}

  \put(3.5,0){\circle*{0.1}}
  \put(4,0){\circle*{0.1}}
  \put(4.5,0){\circle*{0.1}}
  \put(5,0){\circle*{0.1}}

  \put(5.5,0){\circle*{0.1}}
  \put(6,0){\circle*{0.1}}
  \put(6.5,0){\circle*{0.1}}
  \put(7,0){\circle*{0.1}}

  \put(7.5,0){\vector(0,1){2}}
  \put(7.5,-0.2){\makebox(0,0)[t]{$T$}}
  \put(8,0){\vector(0,1){1}}
  \dashline{0.1}(8.3,0.5)(8.8,0.5)
  \put(9,0){\vector(0,1){1.5}}


  \put(9.5,0){\circle*{0.1}}
  \put(10,0){\circle*{0.1}}
  \put(10.5,0){\circle*{0.1}}

  \put(4.5,1.2){\makebox(0,0)[lb]{$[\hat{R}_j, v_j]$}}
  \dashline{0.15}(4.7,1.1)(4,0.2)

  \put(2,-1){\vector(1,0){5.5}}
  \put(2,-1){\vector(-1,0){0}}
  \put(5.25,-1.15){\makebox(0,0)[t]{\small $T-k$ Data Symbols}}

  \put(0,-2){\vector(1,0){7.5}}
  \put(0,-2){\vector(-1,0){0}}
  \put(3.75,-2.15){\makebox(0,0)[t]{\small $T$ Pilot Spacing Interval}}

\end{picture}
    \caption{Pilot Symbols and Channel Estimation
      \label{fig:figKalEst}}
  \end{center}
\end{figure}
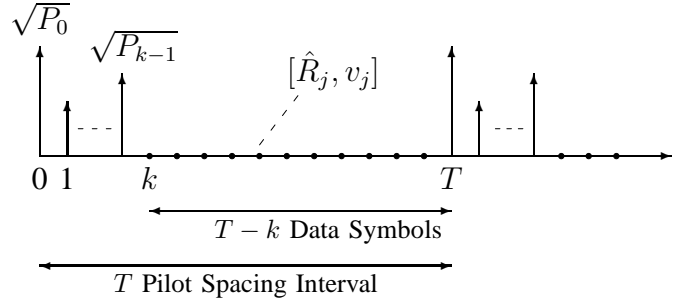

On the receiver side, we perform MMSE estimation based on 
the received signal during training. More precisely, we denote
by $\set{S}$ the set of indices corresponding to the received pilots
$\{Y_s\}_{s\in \set{S}}$ involved in estimating $R_{j}$ for $j = k,
\ldots, T-1$.  Therefore, when $\set{S} = \left\{ 0, \ldots,
  k-1\right\}$ we say we are performing causal MMSE estimation, and
when $\set{S} = \left\{ 0, \ldots, k-1,\, T, \ldots, T+k-1 \right\}$
the MMSE estimate is said to be non-causal.

Next, we compute the MMSE estimate $\hat{R}_j \left( \left\{ Y_{s}
  \right\}_{s\in \set{S}} \right)$ of $R_{j}$ for $j = k, \ldots,
T-1$. Since the random variables $\left\{ R_{j},\,\,\{Y_{s}\}_{s\in
    \set{S}} \right\} $ are jointly Gaussian, the MMSE estimator is
linear and is identical to the Linear Least-Square Estimator (LLSE)
the error variance $v_j$ of which is,
\begin{equation}
  {\displaystyle v_j = 1 - \Lambda_{ R_{j},\,\{Y_s\}_{s\in \set{S}} } 
    \; \Lambda_{\{Y_s\}_{s\in \set{S}} }^{-1}
    \; \Lambda_{ R_{j},\,\{Y_s\}_{s\in \set{S}} }^T\,\,,}
  \label{eqn:error}
\end{equation}
where $\Lambda_{R_{j},\,\{Y_s\}_{s\in \set{S}}}$ is the
cross-covariance matrix between $R_{j}$ and $\{Y_s\}_{s\in \set{S}}$
and $\Lambda_{\{Y_s\}_{s \in \set{S}} }$ is the autocovariance of the
vector of received pilots $\{Y_s\}_{s\in \set{S}}$.

We note that the estimation error variance in
equation~(\ref{eqn:error}) may be computed offline at design time
--and therefore no feedback is needed to the encoder-- and is only
dependent on the autocorrelation function of $\{R_j\}$, the
transmitted pilots and the noise spectral density.

\section{Achievable Rates}
\label{sc:MI}

We consider the transmission scheme shown in
Figure~\ref{fig:figKalEst} with symbols sent with power $P_j$ for $j =
0, \ldots, T-1$. Given a sample $\left\{ y_{s}\right\}_{s\in
  \set{S}}$, the received symbols can be written
\begin{equation*}
  Y_{i} = R_{i}X_{i} + N_{i}
  = \left( \hat{R}_{i} + \Gamma_{i} \right) X_{i} + N_{i},
  \quad \text{for } i = k, \ldots, T-1,
\end{equation*}
where $\Gamma_i$ is a zero-mean complex Gaussian error term that has a
variance $v_i$. Therefore,
\begin{eqnarray*}
  p \left( y_{i}|x_{i},\left\{y_{s}\right\}_{s\in \set{S}} \right)
  & = & \mathcal{N}_{\mathbb{C}} \left( \hat{R}_{i}x_{i}, \, v_{i}
    \left\vert x_{i} \right\vert^{2} + \sigma_{N}^{2}\right)  
  = \frac{1}{\pi \left( v_{i}\left\vert x_{i}\right\vert ^{2}
      +\sigma_{N}^{2}\right) } e^{ -\frac{\left\vert y_{i}-
        \hat{R}_{i}x_{i}\right\vert ^{2}}{v_{i}\left\vert x_{i}
      \right\vert ^{2}+\sigma_{N}^{2}}}.
\end{eqnarray*}

When ignoring the fading correlation from one transmitted frame to
another, the mutual information per symbol due to interleaving can be
written as
\begin{multline}
	\frac{1}{T} \, I \left( \; \{X_i\}_{i=0}^{T-1} \; ; \; \{Y_i\}_{i=0}^{T-1} \;
	| \; \{Y_s\}_{s\in \set{S}} \; \right)
  = \Ep{\{Y_s\}_{s\in \set{S}}} {\frac{1}{T}\sum_{i=k}^{T-1}
    I \left( X_{i};Y_{i}|\{y_s\}_{s\in \set{S}} \right)} \\
     = \frac{1}{T}\sum_{i=k}^{T-1} \Ep{\hat{R}_{i}} {
    I \left( X_{i};Y_{i}|\hat{R}_{i}\right) }
  \label{eqn:lastexp},
\end{multline}
where the expectation is now over the random variable
$\hat{R}_{i}$. Note that $\hat{R}_i$ is a linear combination of the
observations
\begin{equation}
  \hat{R}_{i}=\sum_{m\in\set{S}}\beta _{m}Y_{m}=R_{i}-
  \Gamma _{i}\sim\mathcal{N}_{\mathbb{C}}(0,1-v_{i}).
  \label{eq:pdf_Rhat}
\end{equation}

\subsection{The Computation Method}

The $i^{th}$ term, $\Ep{\hat{R}_{i}}{I\left( X_{i};Y_{i}|
    \hat{R}_{i}\right) }$, in equation~(\ref{eqn:lastexp}) depends on
the choice of the corresponding binary probability distribution fully
characterized by the three parameters $\{m_1, m_2, p_1\}_i$. This
distribution (for $i = k, \ldots ,T-1$) determines the rate of the
corresponding codebook and should be chosen to maximize the mutual
information quantity in~(\ref{eqn:lastexp}). Therefore, we are
interested in solving
\begin{equation}
  \frac{1}{T} \sum_{i=k}^{T-1} \max_{\{m_1,\,m_2,\,p\}_i}
  \Ep{\hat{R}_{i}} { I\left( X_{i};Y_{i} | \hat{R}_{i} \right) }
  \label{eqn:optimObj},
\end{equation}
subject to $\E{\vert X_i\vert^2} \leq P_i$ for all $i = k, \ldots,
T-1$.

Furthermore, examining the probability law~(\ref{eq:pdf_Rhat}) of
$\hat{R}_{i}$ indicates that the elementary quantity $\displaystyle{
  \max_{\{m_1,\,m_2,\,p\}_i} \Ep{\hat{R}_{i}} { I \left( X_{i};Y_{i} |
      \hat{R}_{i}\right) }}$ in~(\ref{eqn:optimObj}) is only a
function of the estimation error variance $v_i$ and power $P_i$ of the
symbol. We define
\begin{equation*}
  I_{sub}(P_i,\,\,v_i) =\max_{\{m_1,\,m_2,\,p\}_i}\Ep{
    \hat{R}_{i}} {I\left(X_{i};Y_{i}|\hat{R}_{i}\right)},
\end{equation*}
where the maximization is subject to $\E{\vert X_i\vert^2} \leq P_i$
and $\hat{R_i}\sim \mathcal{N}_{\mathbb{C}} (0, 1 -
v_{i})$. Thereafter the achievable rates become
\begin{equation}
  \frac{1}{T} \, I \left( \; \{X_i\}_{i=0}^{T-1} \; ; \; \{Y_i\}_{i=0}^{T-1} \;
	| \; \{Y_s\}_{s\in \set{S}} \; \right)
  = \frac{1}{T} \sum_{i=k}^{T-1} I_{sub}(P_i,\,\,v_i)
  \label{eqn:optimObj2}.
\end{equation}

The two dimensional curve $ I_{sub}(P,\,\,v) $ is computed over a fine
grid $\mathscr{V} = \{0 \leq P \leq P_{max}, 0\leq v\leq 1\}$ as shown
in Figure~\ref{fig:Isub}.  Then given a transmission strategy
consisting of an inter-pilot spacing $T$, $k$-pilot clustering, and a
power allocation $P_j$ for $j= 0, \ldots ,T-1$, we calculate using
equation~(\ref{eqn:error}) the estimation error variance $v_j$ for $j
=k,\,\dots, T-1$. The corresponding elementary mutual information
quantity $I_{sub}(P_j,\,v_j)$ can now be interpolated from the data
set $\{\mathscr{V},\,I_{sub}(P,\,v)\}$ and used to compute the
normalized sum in (\ref{eqn:optimObj2}).

\begin{figure}[!hb]
  \begin{center}
    \includegraphics[width=3.7in]{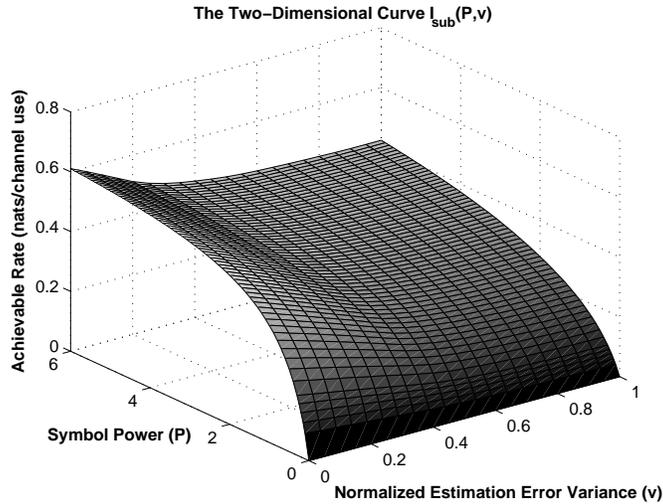}
    \caption{\small The two-dimensional curve $ I_{sub}(P,\,\,v) $
      \label{fig:Isub}}
  \end{center}
\end{figure}

Finally, note that the error variance is a function of the power of
the pilots $\{P_{s}\}_{s\in \set{S}}$. Hence
equation~(\ref{eqn:optimObj2}) can also be written as
\begin{equation}
  I\left( \; \{X_i\}_{i=k}^{T-1} \; ; \; \{Y_i\}_{i=k}^{T-1} \;
    | \; \{Y_s\}_{s\in \set{S}} \; \right)
  = \frac{1}{T} \sum_{i=k}^{T-1} I_{sub} \left( P_i,\,\, \{P_{s}\}_{s\in \set{S}} \right)
  \label{eqn:optimObj3}.
\end{equation}

\subsection{The transmission policy}

We consider four types of transmission policies and we study how the
optimal training strategy differs from one policy to another,
analytically in Section~\ref{sc:OPT} and numerically in
Section~\ref{sc:NR}.

\subsubsection{Policy I} 

The pilot symbols and the data symbols are transmitted with the same
amount of power, i.e.,
\begin{equation*}
  P_s = P, \quad \forall s = 0, \ldots, k-1
  \qquad \& \qquad P_i = P, \quad \forall \, i=k,\ldots,T-1.
\end{equation*}

Therefore. for a given channel model, $k$-pilot training, and an
inter-pilot spacing $T$, the achievable rate in
equation~(\ref{eqn:optimObj3}) is a function of $P$ only.

\bigskip

\subsubsection{Policy II} 
\label{sc:policy2}

In this policy, a flat power allocation is adopted for both the pilot
symbols and the data symbols, but we allow the two levels to be
different. More precisely,
\begin{equation*}
  P_s = P_{tr} \quad \forall s = 0, \ldots, k-1
  \qquad \& \qquad P_i = P_d, \quad \forall \, i=k,\ldots,T-1.
\end{equation*}

The achievable rate is a function of $P_{tr}$ \& $P_d$ which satisfy
\begin{equation*}
  \frac{1}{T}\sum_{j=0}^{T-1} P_j = \frac{1}{T} \bigl[ k P_{tr} + (T-k) P_d \bigr] \leq P.
\end{equation*}  

\bigskip

\subsubsection{Policy III}

Following a flat power allocation for pilots ($P_{s} = P_{tr}, \forall
s = 0, \ldots, k-1$), the data symbols are sent with power $P_i$ for
$i = k, \ldots, T-1$. These power levels satisfy
\begin{equation*}
  \frac{1}{T}\sum_{j=0}^{T-1}P_j = \frac{1}{T} \left[k   P_{tr} + \sum_{j=k}^{T-1}P_j \right] \leq P.
\end{equation*}

\bigskip

\subsubsection{Policy IV}
\label{sc:policy4}

We send both the pilots and data symbols with variable power $P_j$ for
$j = 0,\ldots,T-1$. The constraint on the power levels is now given by
\begin{equation*}
  \frac{1}{T}\sum_{j=0}^{T-1}P_j\leq P.
\end{equation*}

\section{Optimal Power Allocation for Causal Estimation}
\label{sc:OPT}

In this section we find the optimal power allocation and training
duration for policies II, III and IV under causal estimation. These
optimal solutions are found by applying the result of
Theorem~\ref{th:PA_PIV} stated hereafter. The theorem implies that if
we let $k P_{tr}$ be the {\em total\/} power ``budget'' for the
training period, everything else being equal, among all the training
power allocation schemes $\{ P_s \}_{s=0}^{k-1}$ such that
\begin{equation}
  \sum_{s=0}^{k-1} P_s = k P_{tr},
  \label{eq:powtr}
\end{equation}
the optimal one is the one where all the power is allocated to the
last time slot $(k-1)$.  

For causal MMSE estimation $\set{S} = \{0, \cdots, k - 1\}$ and
equation~(\ref{eqn:error}) can be written as
\begin{align}
  v_j & = 1 - \Lambda_{ R_{j},\,\{Y_s\}_{s\in \set{S}} } 
  \; \Lambda_{\{Y_s\}_{s\in \set{S}} }^{-1} \; \Lambda_{ R_{j},\,\{Y_s\}_{s\in \set{S}} }^T 
  \qquad \qquad \qquad j = k, \cdots, (T-1) \nonumber \\
  & = 1 - \alpha^{2(j-k+1)} \left( \alpha^{k-1} \,\, \alpha^{k-2} \,\, \cdots \,\, 1
  \right) \; D^{T} \; \left[ D \; A \; D^{T} + \sigma_N^{2}\;I \right]^{-1}
  \; D \; \left( \begin{array}{c} \alpha^{k-1} \\ 
      \vdots \\ 1 \end{array} \right), \label{eq:nerr}
\end{align}
where $A$ is the $k \times k$ symmetric, positive definite
autocovariance matrix of the channel fading coefficients
$\{R_s\}_{s\in \set{S}}$, and $D$ is the $k \times k$ ``input''
matrix:
\begin{equation*}
  A = \begin{pmatrix}
    1 & \alpha & \cdots & \alpha^{k-1} \\
    \alpha & 1 & \cdots & \alpha^{k-2} \\
    \vdots & \vdots & \ddots & \vdots \\
    \alpha^{k-1} & \alpha^{k-2} & \cdots & 1
  \end{pmatrix},
  \qquad 
  D = \begin{pmatrix}
    \sqrt{P_{0}} & 0 & \cdots & 0  \\
    0 & \sqrt{P_{1}} & \cdots & 0  \\
    \vdots & \vdots & \ddots & \vdots  \\
    0 & 0 & \cdots & \sqrt{P_{k-1}} 
  \end{pmatrix}.
\end{equation*}

A power allocation that minimizes the error variances of the estimates
for all $\{j\}$'s --subject to the power constraint~(\ref{eq:powtr})--
is naturally an optimal one. Examining~(\ref{eq:nerr}), we note that a
power allocation that minimizes $v_{j_o}$ for some $j_o$ will also
minimize $v_j$ for all $\{j\}$'s, as it will be one that maximizes
\begin{equation}
  \left( \alpha^{k-1} \,\, \alpha^{k-2} \,\, \cdots \,\, 1
  \right) \; D^{T} \; \left[ D \; A \; D^{T} + \sigma_N^{2} \;I \right]^{-1}
  \; D \; \left( \begin{array}{c} \alpha^{k-1} \\
      \vdots \\ 1 \end{array} \right).
  \label{eq:obj}
\end{equation}

The power allocation that maximizes~(\ref{eq:obj}) is the subject of
the following theorem, proven in Appendix~\ref{ap:proof}.
\begin{theorem}
  \label{th:PA_PIV}
  The expression~(\ref{eq:obj}) is maximized when all the available
  power is allocated to the last pilot, i.e., $P_j = 0$, for all $0
  \le j \le (k-2)$ and $P_{k-1} = kP_{tr}$.
\end{theorem}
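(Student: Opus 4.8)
The plan is to recognize the scalar quadratic form in~(\ref{eq:obj}) as (one minus) a Kalman filtering error variance, and then to optimize it by induction on the number of pilots. First I would give the objective a probabilistic reading. With $\sigma_R=1$ the matrix $W \eqdef D\,A\,D^{T}+\sigma_N^{2}I$ is exactly the autocovariance of the observation vector $(Y_0,\dots,Y_{k-1})^{T}$, since $\mathrm{Cov}(Y_s,Y_t)=\sqrt{P_sP_t}\,\alpha^{|s-t|}+\sigma_N^{2}\delta_{st}$; and, writing $\mathbf{b}=(\alpha^{k-1},\dots,\alpha,1)^{T}$, the vector $D\mathbf{b}$ has $s$-th entry $\sqrt{P_s}\,\alpha^{\,k-1-s}=\mathrm{Cov}(R_{k-1},Y_s)$. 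Thus~(\ref{eq:obj}) equals $\mathbf{b}^{T}D\,W^{-1}D\,\mathbf{b}$, the variance reduction of the MMSE (filtering) estimate of $R_{k-1}$ from $\{Y_s\}_{s=0}^{k-1}$. Denoting that error variance by $v_{k-1}^{\mathrm f}$, we have $(\ref{eq:obj})=1-v_{k-1}^{\mathrm f}$, so---since maximizing~(\ref{eq:obj}) minimizes every $v_j$---it suffices to \emph{minimize the causal estimation error of the last pilot} $v_{k-1}^{\mathrm f}$ subject to~(\ref{eq:powtr}).

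Next I would discard the matrix expression in favour of the scalar Kalman recursion associated with the chain~(\ref{eqn:model}). Writing $v_s$ for the filtering error of $R_s$ given $Y_0,\dots,Y_s$ and $\sigma\eqdef\sigma_N^{2}$, the prediction and measurement updates read $v_s^-=\alpha^{2}v_{s-1}+(1-\alpha^{2})$ and $1/v_s=1/v_s^-+P_s/\sigma$, initialized at $v_0=\sigma/(\sigma+P_0)$. The structural fact I would lean on is that, for fixed $P_{k-1}$, the quantity $v_{k-1}$ depends on $P_0,\dots,P_{k-2}$ \emph{only through} $v_{k-2}$ and is strictly increasing in it, because both the prediction map and the update map $x\mapsto x\sigma/(\sigma+P_{k-1}x)$ are increasing.

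The core of the argument is then an induction on the number of pilots $m$, with hypothesis: for every budget $B\ge 0$, $\min_{\sum_{s=0}^{m-1}P_s=B}v_{m-1}=\sigma/(\sigma+B)$, attained by placing all of $B$ on the last pilot (the base case $m=1$ is immediate). Assuming it for $m-1$, I would fix $P_{m-1}=q$ and use the monotonicity above to reduce the inner minimization over $P_0,\dots,P_{m-2}$ with budget $B-q$ to minimizing $v_{m-2}$, which the hypothesis evaluates to $\sigma/(\sigma+B-q)$. Substituting leaves a single-variable problem: with $u=B-q$, maximize $g(u)=\frac{\sigma+u}{\sigma+(1-\alpha^{2})u}+\frac{B-u}{\sigma}$. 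A direct differentiation gives $g'(u)=\dfrac{\sigma\alpha^{2}}{\bigl(\sigma+(1-\alpha^{2})u\bigr)^{2}}-\dfrac1{\sigma}$, which equals $(\alpha^{2}-1)/\sigma<0$ at $u=0$ and decreases in $u$; hence $g$ is maximized at $u=0$, i.e.\ $q=B$. This forces $P_{m-2}=0$, $P_{m-1}=B$, closing the induction; at $m=k$, $B=kP_{tr}$ it yields the theorem.

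I expect the genuine obstacle to be the inductive decoupling rather than any individual computation: one must argue cleanly that, with the last-pilot power held fixed, the influence of all earlier powers on $v_{k-1}^{\mathrm f}$ passes through the single scalar $v_{k-2}$ and does so monotonically, so that the best use of the remaining budget is exactly the $(m-1)$-pilot optimum. Once that reduction is secured the problem collapses to the two-pilot calculation above, whose sign analysis is routine. A purely algebraic alternative---peeling off the last observation from $W$ via the matrix-inversion lemma---reproduces the same recursion but with heavier bookkeeping, so the Kalman formulation seems the most economical route.
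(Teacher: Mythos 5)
Your proof is correct, but it follows a genuinely different route from the paper's. The paper never leaves matrix algebra: it first proves a rearrangement result (Lemma~\ref{lem:one} of the appendix: with $U$ diagonal, $V^T[A+U]^{-1}V$ is maximized by placing the diagonal entries in non-increasing order), by induction on $k$ via Schur-complement block inversion; it then regularizes the powers as $P_j=\epsilon+P'_j$ so that $D$ is invertible, rewrites the objective as $V^T\left[A+\sigma_N^2 D^{-2}\right]^{-1}V$, applies the lemma with $x_i=\sigma_N^2/(\epsilon+P'_i)$, uses the sign of the partial derivatives~(\ref{eq:der1})--(\ref{eq:der2}) to force the budget constraint tight and drive $P'_{k-2}$ (hence all earlier powers) to zero, and finally lets $\epsilon\to 0$. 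You instead exploit the probabilistic meaning of the quadratic form, $\phi=1-v_{k-1}$ --- a fact the paper records but never uses as the engine of the proof --- and run the scalar Kalman/Riccati recursion of the Gauss--Markov chain~(\ref{eqn:model}), $v_s=\bigl(1/(\alpha^2 v_{s-1}+1-\alpha^2)+P_s/\sigma_N^2\bigr)^{-1}$, inducting on the number of pilots with a one-variable calculus step; all your intermediate computations check out. Your route buys three things: no $\epsilon$-regularization is needed (zero powers are handled natively by the recursion); the decoupling that the paper extracts by block inversion is immediate from the Markov structure; and you get the optimum in closed form, $\min v_{k-1}=\sigma_N^2/(\sigma_N^2+kP_{tr})$, which makes the conclusion transparent --- optimal causal training is information-equivalent to a single pilot carrying the whole budget. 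What the paper's heavier machinery buys is Lemma~\ref{lem:one} as a standalone, stronger statement: it orders the optimal powers (non-decreasing in time) for an \emph{arbitrary} fixed multiset of power levels, so it still yields structural information when ``all power on one pilot'' is infeasible, e.g.\ under the peak-power constraints discussed in Section~\ref{sc:NR}, a regime where your induction, which always collapses to the unconstrained optimum, says nothing. Two cosmetic points: your ``strictly increasing'' claim for the prediction map fails at $\alpha=0$ (it is constant there, though the theorem is then trivial), and you should note explicitly that the real-Gaussian MMSE variance formulas you invoke hold verbatim for the circular complex Gaussians of the model.
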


We note that Theorem~\ref{th:PA_PIV} holds whenever one allows the
power allocation during the training period to vary across the
pilots. We also note that the result holds irrespective of how the
power is allocated for the data.

\subsubsection*{Implications on the training duration} 
\begin{itemize}
\item When considering policy IV, the powers of the individual
  training symbols are allowed to vary and the theorem states that all
  the power should be allocated to the last training symbol. Factoring
  in the loss of achievable rates due to training, it becomes clear
  that the optimal duration is that of {\em one\/} pilot transmission.
\item Since the achievable rates using policy III are less or equal to
  those of policy IV, and since the optimal solution for policy IV is
  that of a ``flat'' power allocation over the duration of the
  training --which is one, then the solution is also optimal for
  policy III.
\item Finally, since the statement of the theorem is valid irrespective
  of how the power is allocated during data transmission and
  specifically even when a flat power allocation is used, the result
  implies that for policy II, using a training duration of one pilot
  is optimal as well.
\end{itemize}

Naturally, these statements are true if the power level during
training is optimized. In Section~\ref{sc:NR} we validate numerically
these results.

\section{Numerical Results}
\label{sc:NR}

For a given channel model, a given SNR (power constraints), and
estimation technique (causal or non-causal), we numerically determine
the optimal training strategy consisting of:
\begin{enumerate}[1.]
\item The duration of training or the number of pilots $k$.
\item The inter-pilot spacing $T$.
\item The power allocation for the pilots and data symbols in a
  transmitted frame, according to the transmission policy used.
\end{enumerate}

In our work, the quality measure is the achievable rates which we
compute for pilot clustering/training period of up to six pilots in
each frame. We study the low \emph{received} SNR regime (SNR values of
-3dB, 0dB, 3dB, and 6dB) for a first-order Gauss-Markov fading process
with values of $\alpha = 0.9, 0.95, 0.97,\,\textrm{and}\,0.99$.  On
the receiver side, causal and non-causal estimation are
investigated. We present hereafter graphs for some chosen test cases
and compare the rates achieved using 1, 2, 3, 4, 5, and 6-pilot
clustering strategies for different scenarios of SNR and fading
correlation levels.

We note first that the numerical results confirm the observation
previously made that the achievable rate in
equation~(\ref{eqn:lastexp}) depends on the choice of $\{m_1, m_2,
p_1\}_i$, i.e., the input distribution of the $i$-th symbol. As the
symbol gets further away from the training pilots, the channel
estimation quality (measured through the estimate error variance) is
degraded and hence the amount of information sent over the channel
decreases. This is translated by shifting $\{m_1, m_2, p_1\}_i$ from
the antipodal distribution (optimal for a perfectly known channel)
with $p_1 \approx p_2$ (high entropy) toward the other extreme of
on-off keying (optimal for the IID Rayleigh fading case) with $p_1\gg
p_2$ (low entropy).

We also note that in the case of causal estimation, our numerical
results are consistent with the results in Section~\ref{sc:OPT}.

\subsection{Results for Transmission Policy I}

For transmission policy I, pilot clustering proves to achieve higher
rates under certain conditions compared to the 1-pilot scheme.  In
Figure~\ref{fig:gcf1}, for an SNR = 0dB, $\alpha$ = 0.99 and causal
estimation, training with 4 pilots and inter-pilot spacing of $T$=29
symbols is optimal. A percent increase of 8.2\% in information rate is
achieved relative to the best rate achievable with a 1-pilot
scheme. The results for other test cases are shown in
Table~\ref{table:table}.  

\begin{figure}[!hp]
  \begin{center}
    \includegraphics[width=3.7in]{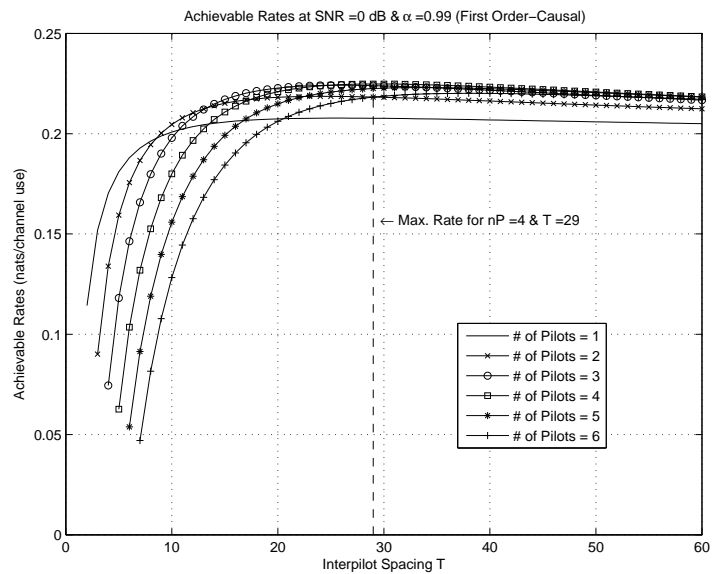}
    \caption{\small Achievable Rates for policy I, for SNR = 0dB and
      $\alpha$ = 0.99 with Causal Estimation.
      \label{fig:gcf1}}
  \end{center}
\end{figure}

\begin{table}
  \begin{center}
    \begin{threeparttable}
      \caption{Achievable Rates for Different Transmission Policies}
      \centering
      \begin{tabular}{|c||c|c|c|c|}
        \hline
        \hline
        Test Case & Policy I  & Policy II & Policy III & Policy IV \\ \hline
        \hline
        
        {$\alpha\, =\, 0.9$} & nP=4 & nP=1 & nP=1 & nP=1 \\
        {SNR = 0dB} & T=29 & T=22 & T=23 & T=22  \\
        {Causal Estimation} & Rate$\approx$0.2247 & Rate$\approx$0.2418 & 
        Rate$\approx$0.2422 & Rate$\approx$0.2422 \\
        &8.2\%$\uparrow$\tnote{1} & 7.6\%$\uparrow$\tnote{2}& & \\\hline
        
        {$\alpha\, =\, 0.97$} & nP=1 & nP=1 & nP=1 & nP=1 \\
        {SNR = 6dB} & T=15 & T=15 & T=15 & T=15  \\
        {Causal Estimation} & Rate$\approx$0.3782 & Rate$\approx$0.3829 & 
        Rate$\approx$0.3836 & Rate$\approx$0.3836 \\
        & &1.2\%$\uparrow$\tnote{2}& &\\\hline
        
        {$\alpha\, =\, 0.97$} & nP=3 & nP=1 & nP=1 & nP=1 \\
        {SNR = -3dB} & T=19 & T=18 & T=18 & T=18  \\
        {Non-Causal Estimation} & Rate$\approx$0.1374 & Rate$\approx$0.1470 & 
        Rate$\approx$0.1472 & Rate$\approx$0.1472 \\
        &4.3\%$\uparrow$\tnote{1} & 6.9\%$\uparrow$\tnote{2}& & \\\hline \hline
        
        {Using Jakes' model:} & & & & \\
        {$f_d\, =\,100\,$Hz, $f_s\,=\,10\,$KHz} & nP=2 & nP=1 & nP=1 & nP=1 \\
        {SNR = 3dB} & T=14 & T=13 & T=13 & T=13  \\
        {Causal Estimation} & Rate$\approx$0.3224 & Rate$\approx$0.3508 & 
        Rate$\approx$0.3510 & Rate$\approx$0.3510 \\
        &4\%$\uparrow$\tnote{1} & 8.8\%$\uparrow$\tnote{2}& & \\\hline
        
        {Using Jakes' model:} & & & & \\
        {$f_d\, =\,100\,$Hz, $f_s\,=\,10\,$KHz} & nP=4 & nP=1 & nP=1 & nP=1 \\
        {SNR = 0dB} & T=29 & T=30 & T=30 & T=30  \\
        {Non-Causal Estimation} & Rate$\approx$0.2843 & Rate$\approx$0.3064 & 
        Rate$\approx$0.3064 & Rate$\approx$0.3064 \\
        &16\%$\uparrow$\tnote{1} & 7.7\%$\uparrow$\tnote{2}& & \\\hline
      \end{tabular}\label{table:table}
      \begin{tablenotes}
      \item[1] relative to the rate achieved by the 1-pilot scheme (Policy I).
      \item[2] relative to the achievable rate under Policy I.
      \end{tablenotes}
    \end{threeparttable}
  \end{center}
\end{table}

However there are some scenarios when pilot-clustering is not useful.
For the case when SNR = 6dB, $\alpha$ = 0.97 and causal estimation,
the 1-pilot scheme presents optimal rates.  

Moreover in Figure~\ref{fig:gcf2} at an SNR=0dB, and $\alpha$=0.9 with
causal estimation, training is not beneficial in the first place
because the information rate is less than that achieved over an IID
Rayleigh fading channel.

\begin{figure}[!htp]
  \begin{center}
    \includegraphics[width=3.7in]{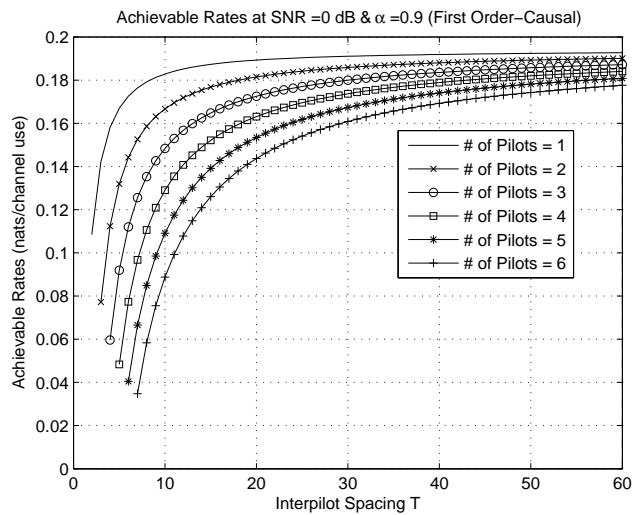}
    \caption{\small Achievable Rates for policy I, for SNR = 0dB and
      $\alpha$ = 0.9 with Causal Estimation.
      \label{fig:gcf2}}
  \end{center}
\end{figure}

As a conclusion, we can distinguish three cases. The first is when
training is not applicable. The second is when the 1-pilot scheme
gives the highest rates. And finally the third when pilot clustering
is beneficial. From our numerical results, we note that as SNR
increases and coherence time decreases, clustering becomes useless and
the whole scheme is pushed toward the 1-pilot training strategy and
even to the extreme case of no training at all.  This is directly
related to the fact that training is inefficient (less CSI) when
fading decorrelates quickly or when SNR is high.


\subsection{Results for Transmission Policy II}
\label{sc:policyII}

In this policy, the pilots are sent with fixed power $P_{tr}$ ({\em
  $\,$per pilot$\,$}) and so are the symbols that are transmitted with
power $P_d$ (Section~\ref{sc:policy2}), such that
$\displaystyle{\frac{1}{T}\sum_{j=0}^{T-1}P_j\leq P}$. Therefore, the
optimal training strategy includes determining the optimal power
allocation ($P_{tr}$ and $P_d$) for the transmitted frame. Here the
notion of SNR is naturally associated with the average power $P$.

Figure~\ref{fig:gcf3} shows the achievable rates for SNR=0dB, and
$\alpha$=0.99 with causal estimation.  Unlike the results for policy I
(Figure~\ref{fig:gcf1}), training with 4 pilots is not optimal
anymore.  The 1-pilot scheme (with $T$=22) now offers 7.6\% increase
in the achievable rate compared to the 4-pilot scheme for policy
I. The corresponding optimal power allocation across the transmission
frame is shown in Figure~\ref{fig:gcf4}.

\begin{figure}[!htp]
  \begin{center}
    \includegraphics[width=3.7in]{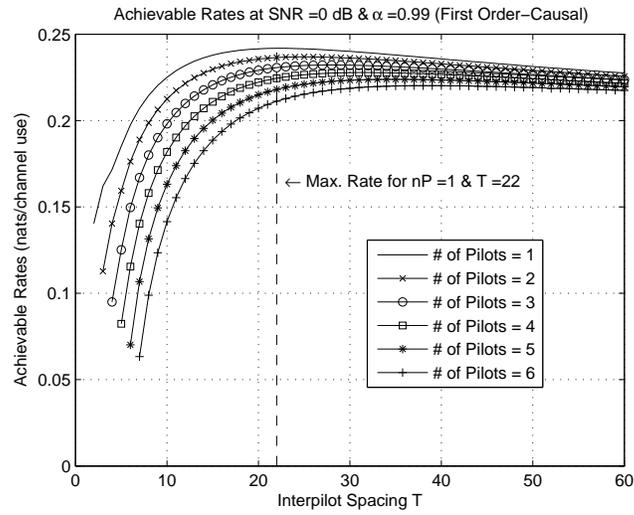}
    \caption{\small Achievable Rates for policy II, for SNR = 0dB and
      $\alpha$ = 0.99 with Causal Estimation.
      \label{fig:gcf3}}
  \end{center}
\end{figure}

\begin{figure}[!htp]
  \begin{center}
    \includegraphics[width=3.7in]{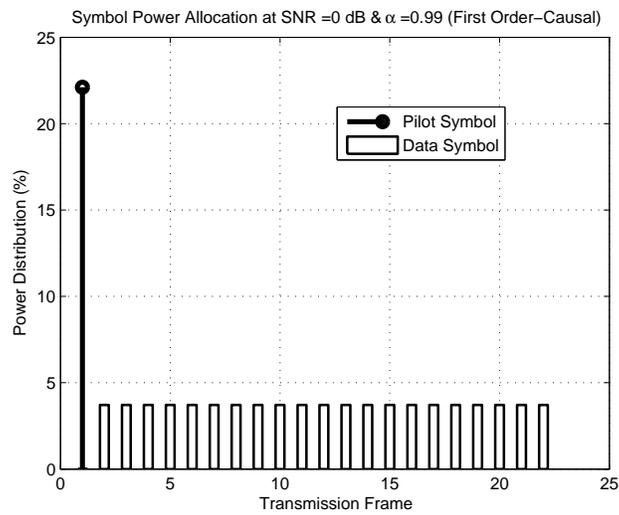}
    \caption{\small Optimal Symbol Power Allocation (one frame) for
      policy II, for SNR = 0dB and $\alpha$ = 0.99 with Causal
      Estimation.
      \label{fig:gcf4}}
  \end{center}
\end{figure}

The rest of the results are presented in Table~\ref{table:table} and
they all confirm that, as expected pilot clustering is not optimal for
policy II, and for any transmission strategy where the pilots' power
is subject to optimization for that matter.  In this case, the
transmitter decreases the estimation error variance (higher
throughput) by boosting the power of the single pilot instead of
increasing the number of pilots $k$ and getting penalized by the
normalizing term $\displaystyle{\frac{1}{T}}$ in
equation~(\ref{eqn:optimObj3}). 

If a peak power constraint is imposed on the power of the pilots, the
optimal training duration will not necessarily be one pilot. This can
be seen from Figure~\ref{fig:gcf4_p} which shows the optimal power
allocation across the transmission frame for SNR=0dB, and
$\alpha$=0.99 with causal estimation whenever a peak constraint
$P_{tr} \leq 3 P$ is imposed. This constraint is effectively imposing
a maximum Peak-to-Average Power Ratio (PAPR) value of 3.

\begin{figure}[!htp]
  \begin{center}
    \includegraphics[width=3.7in]{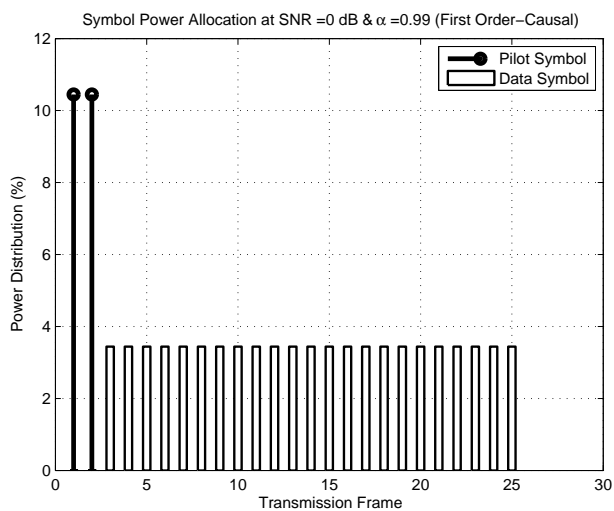}
    \caption{\small Optimal Symbol Power Allocation (one frame) for
      policy II, for SNR = 0dB and $\alpha$ = 0.99 with Causal
      Estimation and peak constraint $P_{tr} \leq 3 P$.
      \label{fig:gcf4_p}}
  \end{center}
\end{figure}

As mentioned earlier, there are some scenarios where training is not
useful and the rate is always less than that achieved over an IID
Rayleigh fading channel. This is observed with causal estimation for
an SNR=0dB and $\alpha$=0.9 for example. In that case all the power is
allocated to the data symbols indicating that training is not
beneficial.


\subsection{Results for Transmission Policy III}

For policy III, we send the data symbols with varying power as we hold
on to a flat power allocation for the pilots.  As already shown in the
Section~\ref{sc:OPT}, clustering is not useful for this case as well.
The transmitter boosts the power of the single pilot used in training
to decrease the error variance and increase the achievable rate.

The numerical results are in accordance with those of
Section~\ref{sc:OPT} and they show how the power of the symbols is
adapted to the estimation error variance. In Figure~\ref{fig:gcf6},
the power allocated to each symbol and the variation of the error
variance are presented for an SNR=0dB, and $\alpha$=0.99 with
non-causal estimation.  This shows that symbols with lower variance
are sent with higher power and vice versa.  However we should note
that power variations among the data symbols is not profound.

\begin{figure}[!htp]
  \begin{center}
    \includegraphics[width=3.7in]{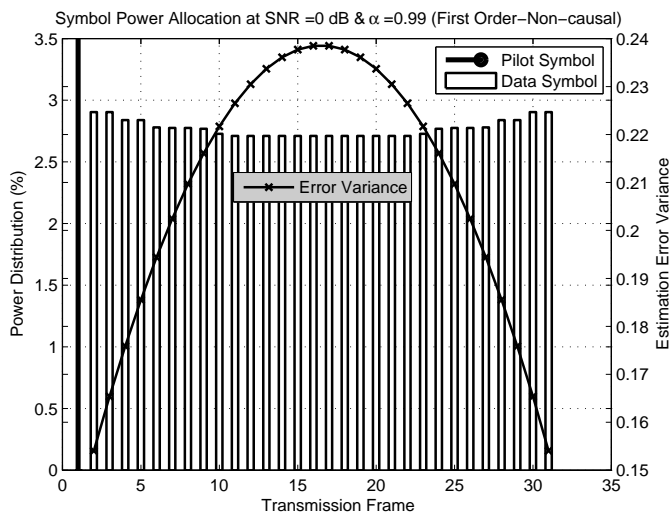}
    \caption{\small Optimal Symbol Power Allocation (one frame) for
      policy III, for SNR = 0dB and $\alpha$ = 0.99 with Non-Causal
      Estimation.
      \label{fig:gcf6}}
  \end{center}
\end{figure}

The achievable rates for other cases are summarized in
Table~\ref{table:table}. It is noticed that adapting the symbol power
to the quality of estimation introduces a slight increase in
achievable rates compared to policy II. As a result, one can say that
uniform power allocation for the data symbols is sufficiently close to
optimal and presents a more practical transmission strategy.

\subsection{Results for Transmission Policy IV}


Here both the pilots and data symbols are sent with varying power
(Section~\ref{sc:policy4}). However from the results for transmission
policy III, we already know that sending the data symbols with uniform
power is very close to optimal.

Let us consider the case for an SNR=0dB, and $\alpha$=0.99. We choose
a 4-pilot training scheme.  For causal estimation, the power allocated
to the pilots is shown in Figure~\ref{fig:gcf7}. We notice that all of
the power was found numerically to be allocated to the pilot closest
to the symbols leaving the rest of the pilots that are further away
with no power and therefore useless, which is consistent with the
results of Theorem~\ref{th:PA_PIV} and Section~\ref{sc:OPT}.
Combining this result with the penalty factor
$\displaystyle{\frac{1}{T}}$ in equation~(\ref{eqn:lastexp}), we reach
the conclusion that the 1-pilot scheme is always optimal
(Table~\ref{table:table}).

\begin{figure}[!htp]
  \begin{center}
    \includegraphics[width=3.7in]{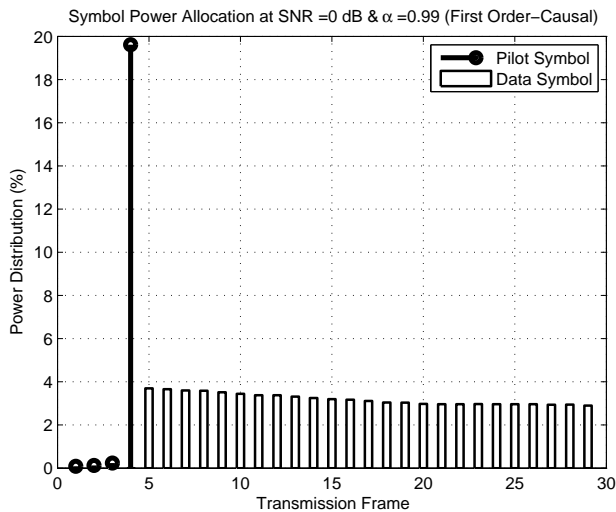}
    \caption{\small Optimal Symbol Power Allocation (one frame) for
      policy IV, for SNR = 0dB and $\alpha$ = 0.99 with Causal
      Estimation.
      \label{fig:gcf7}}
  \end{center}
\end{figure}

A similar result is shown in Figure~\ref{fig:gcf8} for the non-causal
estimation scenario. The powers of the first pilot (playing a
prominent role in the non-causal part) and last pilot (with a
prominent role in the causal part) are increased. 

\begin{figure}[!htp]
  \begin{center}
    \includegraphics[width=3.7in]{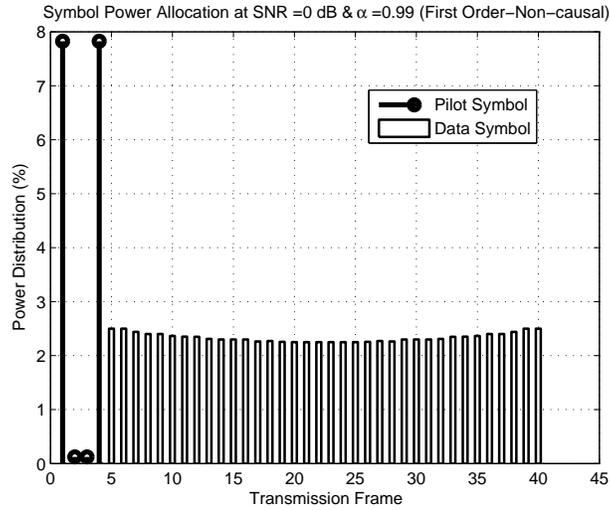}
    \caption{\small Optimal Symbol Power Allocation (one frame) for
      policy IV, for SNR = 0dB and $\alpha$ = 0.99 with Non-Causal
      Estimation.
      \label{fig:gcf8}}
  \end{center}
\end{figure}

Whenever a peak power constraint is imposed on the power of the
pilots, the optimal training duration will potentially involve pilot
clustering. The optimal duration and power allocation in
Figure~\ref{fig:gcf7_p} are for an SNR=0dB, and $\alpha$=0.99 with
causal estimation whenever a peak constraint $P_{tr} \leq 3 P$ is
imposed.

\begin{figure}[!htp]
  \begin{center}
    \includegraphics[width=3.7in]{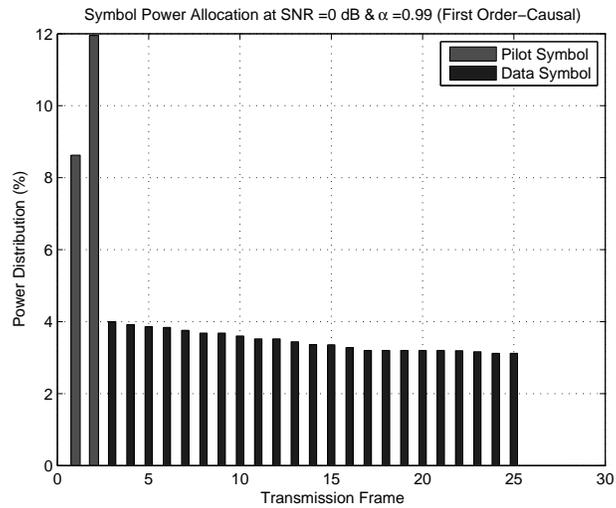}
    \caption{\small Optimal Symbol Power Allocation (one frame) for
      policy IV, for SNR = 0dB and $\alpha$ = 0.99 with Causal
      Estimation and peak constraint $P_{tr} \leq 3 P$.
      \label{fig:gcf7_p}}
  \end{center}
\end{figure}

\section{Other Fading Process Models}
\label{sc:Ext}

Whenever the fading process follows a different model, appropriate
results may be readily derived as the numerical optimization is only
dependent on the autocovariance function of the process as seen from
equation~(\ref{eqn:error}). In what follows, we present sample results
using Jakes' model.

\subsection*{Jakes' Model}

In Jakes' model~\cite{JakesBook}, the normalized (unit variance)
continuous-time autocorrelation function of the fading process is
given by
\begin{equation*}
  \phi_{RR}(\tau)=J_0(2\pi f_d \tau),
\end{equation*}
where $J_0(.)$ is the zeroth-order Bessel function of the first kind
and $f_d$ is the maximum Doppler frequency.  For the purposes of
discrete-time simulation of this model \cite{Baddour05}, the
autocorrelation sequence becomes
\begin{equation*}
  \phi_{RR}[l] = J_0(2\pi f_d\,T_s\,|l|),
\end{equation*}
where $1/T_s$ is the symbol rate.

In Table~\ref{table:table} we list a sample of the results obtained
for a bandwidth $f_s=10$ kHz and a Doppler shift of $f_d=100$ Hz. For
example, optimal training consists of $k=4$ and $T=29$ when we have an
SNR=0dB and non-causal estimation. Throughput is improved by 16\% in
this case.

\section{Conclusion}
\label{sc:conc}

We studied the performance of the non-feedback pilot-based adaptive
modulation scheme~\cite{AMM05, ZAF09, BAM04} over time-varying
Rayleigh fading channels. We measured the performance in terms of
achievable rates using binary signaling and we investigated the
benefits of pilot clustering as well as power allocation.


We introduced a modular method to compute the rates in an efficient
manner. Moreover, four types of transmission policies were
analyzed. For each policy, we determined the optimal training strategy
consisting of:
\begin{enumerate}[1.]
\item The duration of training.
\item The inter-pilot spacing.
\item The power allocation for the pilots and data symbols in the
  frame.
\end{enumerate}

Pilot clustering proved to be useful in the {\em low SNR--high
  coherence time} range where training is efficient (Policy
I). However, when the pilot power is subject to optimization (Policies
II, III and IV), training for a smaller period but with boosted power
becomes more beneficial than training with more pilots. We proved that
the optimal training duration using causal estimation is indeed one
whenever the power level during training is optimized and allowed to
take arbitrary values. Numerical results suggest that this is also the
case when using non-causal estimation at the receiver.

We also noted that the numerical computations indicate that a flat
power allocation across the data slots in a frame is very close to
optimal whenever the pilot power is subject to optimization.

On the other hand, training is useless in the {\em high SNR--small
  coherence time} range and the rate is always less than that achieved
over an IID Rayleigh fading channel. Several test cases are shown
throughout this work to analyze how optimal training varies with
channel conditions and from one transmission policy to another.

Extensions to this work can include adaptive schemes that integrate
temporal and spatial components like the Multiple-Input
Multiple-Output (MIMO) scenario.

\appendices
\section{}
\label{ap:proof}
 
In this appendix we provide a proof for Theorem~\ref{th:PA_PIV}. For
notational convenience, define
\begin{align*}
  \phi & \eqdef  V^{T} \; D^{T} \; \left[ D \; A \; D^{T} + \sigma_N^{2} \;I
  \right]^{-1} \; D \; V,
\end{align*}
where $V = \left( \alpha^{k-1} \,\, \alpha^{k-2} \,\, \cdots \,\, 1
\right)^T$, $A$ is the $k \times k$ symmetric, positive definite
autocovariance matrix of the channel fading coefficients
$\{R_s\}_{s\in \set{S}}$, and $D$ is the $k \times k$ ``input''
matrix:
\begin{equation*}
  A = \begin{pmatrix}
    1 & \alpha & \cdots & \alpha^{k-1} \\
    \alpha & 1 & \cdots & \alpha^{k-2} \\
    \vdots & \vdots & \ddots & \vdots \\
    \alpha^{k-1} & \alpha^{k-2} & \cdots & 1
  \end{pmatrix},
  \quad 
  D = \begin{pmatrix}
    \sqrt{P_{0}} & 0 & \cdots & 0  \\
    0 & \sqrt{P_{1}} & \cdots & 0  \\
    \vdots & \vdots & \ddots & \vdots  \\
    0 & 0 & \cdots & \sqrt{P_{k-1}} 
  \end{pmatrix},
  \quad 
  V = \left( \begin{array}{c} \alpha^{k-1} \\ \alpha^{k-2} \\
      \vdots \\ 1 \end{array} \right).
\end{equation*}

Note that $\phi < 1$ for any $k \geq 1$ because $\phi = 1 - v_{k-1}$.
We establish first the following lemma:
\begin{lem}
  \label{lem:one}
  Let $U$ be a $k \times k$ diagonal matrix with non-negative entries
  $\{x_i\}_{i=0}^{k-1}$ on the diagonal. Among all the permutations of
  the $\{x_i\}$'s, the one that maximizes $V^T \left[A + U
  \right]^{-1} V$ is one where the diagonal entries are in
  non-increasing order.
\end{lem}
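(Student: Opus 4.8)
The plan is to exploit two structural facts about the objective and then settle the rearrangement by a bubble-sort argument whose only analytic content is an elementary two-variable inequality. The first fact is that $V$ is exactly the last column of $A$, i.e. $V = A e_{k-1}$ (indeed $A_{i,k-1} = \alpha^{\,k-1-i} = V_i$); the second is that the leading principal blocks of $A$ are again AR(1) covariance matrices of the same form. Together these let me collapse the $k$-dimensional quadratic form $\phi_k := V^T[A+U]^{-1}V$ into a scalar recursion.

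First I would rewrite $\phi_k$. Writing $A = (A+U) - U$ gives the identity $A[A+U]^{-1}A = A - U + U[A+U]^{-1}U$; evaluating at $V = Ae_{k-1}$ and letting $x_{k-1}$ be the diagonal entry of $U$ in the last slot yields $\phi_k = 1 - x_{k-1} + x_{k-1}^2\,[(A+U)^{-1}]_{k-1,k-1}$. I would then apply the Schur-complement formula to the bottom-right entry of the inverse. The leading $(k-1)\times(k-1)$ block of $A$ is an AR(1) covariance $A'$, its border column equals $\alpha V'$ with $V' = A'e_{k-2}$, and the corresponding block of $U$ is $\mathrm{diag}(x_0,\dots,x_{k-2})$; hence the quadratic form in the Schur complement is precisely $\alpha^2\phi_{k-1}$, where $\phi_{k-1}$ denotes the same objective on the first $k-1$ slots. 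This produces the recursion
\[
\phi_k \;=\; 1 - x_{k-1} + \frac{x_{k-1}^2}{1 + x_{k-1} - \alpha^2 \phi_{k-1}}
\;=\; g\!\left(x_{k-1},\, 1 - \alpha^2\phi_{k-1}\right),
\qquad
g(y,c) \;:=\; \frac{c + y(1-c)}{y + c},
\]
with base case $\phi_0 = 0$.

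The recursion carries two monotonicity properties that drive the proof. Since $\partial\phi_k/\partial\phi_{k-1} = \alpha^2 x_{k-1}^2/(1+x_{k-1}-\alpha^2\phi_{k-1})^2 \ge 0$, an improvement of the sub-objective at any level propagates up the chain to $\phi_k$. Freezing $\phi_j$ and composing two consecutive steps, the map $(x_j,x_{j+1}) \mapsto \phi_{j+2} = g\!\left(x_{j+1},\, 1 - \alpha^2 g(x_j,c)\right)$, with $c = 1 - \alpha^2\phi_j \in (0,1]$, is an explicit rational function $G(x_j,x_{j+1})$. I would show that for $a \ge b \ge 0$ it satisfies $G(a,b) \ge G(b,a)$, i.e. a larger value is better placed in the earlier slot: after clearing the positive denominators, the difference of numerators factors as $(1-\alpha^2)(a-b)$ times a sum of terms each manifestly nonnegative because $a,b \ge 0$, $0 < c \le 1$, and $1 - \alpha^2(1-c)^2 > 0$.

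Finally I would assemble these by a transposition/bubble-sort argument: whenever two adjacent diagonal entries are out of order, $x_j < x_{j+1}$, swapping them does not decrease $\phi_{j+2}$ (the two-variable inequality) and therefore does not decrease $\phi_k$ (the up-the-chain monotonicity); a finite sequence of such swaps sorts the diagonal into non-increasing order, which is consequently a maximizer. The main obstacle is the first step: recognizing and proving the self-similar scalar recursion, since that is what converts a seemingly delicate $k$-dimensional matrix inequality into a one-line sign check. Once the recursion is in hand everything else is routine; the only loose end is the degenerate case where some $x_i \to \infty$ (a pilot with vanishing power), which is dispatched by continuity and monotone limits.
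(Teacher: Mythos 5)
Your proof is correct, and its two load-bearing computations coincide with the paper's: the Schur-complement reduction to the scalar recursion $\varphi_k = 1-x_{k-1}+x_{k-1}^2/(1+x_{k-1}-\alpha^2\varphi_{k-1})$ (the paper arrives at exactly this by identifying $F^TE^{-1}F=\alpha^2\varphi_{k-1}$), and a pairwise swap inequality settled by clearing denominators, where the difference carries the factor $(1-\alpha^2)(a-b)$ times a manifestly nonnegative bracket --- your bracket $ab\bigl(1-\alpha^2(1-c)^2\bigr)+c(1-\alpha^2+\alpha^2 c)(a+b)+c^2(1-\alpha^2)$ is, under $c=1-\alpha^2\varphi_j$, the same computation as the paper's comparison of $\xi(x_{k-2},x_{k-1})$ with $\xi(x_{k-1},x_{k-2})$. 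What genuinely differs is the route and the assembly. The paper derives the recursion by block-inverting $[A_k+U_k]$ directly and, for the swap step, performs a \emph{second} decomposition with a $2\times 2$ corner block, so its swap inequality applies only to the last two diagonal slots; it then closes by induction plus a contradiction argument. You obtain the recursion more cleanly from $V=Ae_{k-1}$ and the identity $A[A+U]^{-1}A=A-U+U[A+U]^{-1}U$, obtain the swap inequality at an \emph{arbitrary} adjacent pair by composing the recursion twice, and finish by bubble sort, propagating each local improvement up the chain via $\partial\varphi_{m+1}/\partial\varphi_m\ge 0$. Your organization buys uniformity: a single decomposition, purely scalar algebra thereafter, and no contradiction step; the paper's keeps every quantity anchored in explicit blocks of $[A_k+U_k]$. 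Two small points you should make explicit: prove $0\le\varphi_j\le 1$ (immediate by induction, since $g(y,c)\in[0,1]$ for $y\ge 0$, $c\in(0,1]$), as this is what guarantees $c\in(0,1]$ and keeps all denominators positive; and note that the $x_i\to\infty$ degeneracy is irrelevant to the lemma itself, whose entries are finite by hypothesis --- it only arises in the theorem's application.
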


\begin{proof}
  Assume that $\{x_i\}_{i=0}^{k-1}$ are in the following order: $0 \le
  x_0 \le x_1 \le \cdots \le x_{k-1}$. We prove in what follows that
  $U = diag(x_{k-1}, x_{k-2}, \cdots, x_0)$ maximizes $V^T \left[A + U
  \right]^{-1} V$ using induction on $k$. To highlight the dependence
  on $k$ we denote $\varphi_k = V_k^T \left[A_k + U_k \right]^{-1}
  V_k$, which is a positive quantity due to the positive definiteness
  of $\left[A_k + U_k \right]^{-1}$.

  \paragraph{Base Cases}
  For $k = 1$, $\varphi_1 = \frac{1}{1+x_0}$ and the statement holds.
  Examine now the case $k = 2$:
  \begin{align*}
    U = diag(x_0, x_1) & \Longrightarrow \varphi_2 = \frac{(\alpha^2 x_1 + x_0
      - \alpha^2 + 1)}{(x_0+1)(x_1+1)-\alpha^2} \\
    U = diag(x_1, x_0) & \Longrightarrow \varphi_2 = \frac{(\alpha^2 x_0 + x_1 - \alpha^2 +
      1)}{(x_1+1)(x_0+1)-\alpha^2}.
  \end{align*}
  Since $\alpha < 1$ and $x_1 \geq x_0$, the second value is larger.

  \paragraph{Induction Step}
  Suppose the property holds true up to $k-1$ ($k-1 \geq 2$) and we
  prove in what follows that it holds true for $k$:
  \begin{equation*}
    \varphi_{k} = \begin{pmatrix}
      \alpha^{k-1} & \alpha^{k-2} & \cdots & 1
    \end{pmatrix}
    \;\left[A_{k} + U_{k}\right]^{-1} \;
    \begin{pmatrix}
      \alpha^{k-1} \\
      \alpha^{k-2} \\
      \vdots \\
      1
    \end{pmatrix},
  \end{equation*}
  where $A_{k}$ and $U_{k}$ are square matrices of size $k$. We prove
  that $\varphi_{k}$ is maximized when $\{x_i\}_{i=0}^{k-1}$ are
  placed in non-increasing order on the diagonal matrix $U_{k}$.  The
  proof proceeds as follows: We first ``fix'' $x_{k-1}$ on the last
  diagonal entry of $U_{k}$ and prove that $\{x_i\}_{i=0}^{k-2}$
  should be in a non-increasing order to maximize $\varphi_{k}$. Next,
  we ``fix'' $\{x_i\}_{i=0}^{k-3}$ on the first $(k-2)$ diagonal
  entries of $U_{k}$ and we prove that, if $x_{k-2} \le x_{k-1}$,
  having $U = diag\{x_0, x_1, \cdots, x_{k-3}, x_{k-1}, x_{k-2}\}$
  (versus $U = diag\{x_0, x_1, \cdots, x_{k-3}, x_{k-2}, x_{k-1}\}$)
  gives us a larger value of $\varphi_{k}$, completing the proof.
  
  $\bullet$ Using a block form, we write $[A_{k} + U_{k}]$ as:
  \begin{equation*}
    A_{k} + U_{k} = \begin{pmatrix}
      E & F \\
      F^{T} & G
    \end{pmatrix},
  \end{equation*}
  where 
  \begin{equation*}
    E = \begin{pmatrix}
      x_0+1 & \alpha & \cdots & \alpha^{k-2} \\
      \alpha & x_1+1 & \cdots & \alpha^{k-3} \\
      \vdots & \vdots & \ddots & \vdots \\
      \alpha^{k-2} & \alpha^{k-3} & \cdots & x_{k-2}+1
    \end{pmatrix}, \quad
    F = \begin{pmatrix}
      \alpha^{k-1} \\
      \alpha^{k-2} \\
      \vdots \\
      \alpha
    \end{pmatrix} = \alpha V_{k-1}
    \quad \& \quad G = \begin{pmatrix}
      x_{k-1} + 1
    \end{pmatrix}.
  \end{equation*}
  
  This allows us to express $[A_{k}+U_{k}]^{-1}$ as~\cite{matrix}:
  \begin{equation*}
    [A_{k}+U_{k}]^{-1} = \begin{pmatrix}
      E^{-1} + E^{-1} F [G - F^{T} E^{-1} F]^{-1} F^{T} E^{-1} &
      -E^{-1} F [G - F^{T} E^{-1} F]^{-1} \\
      -[G - F^{T} E^{-1} F]^{-1} F^{T} E^{-1} &
      [G - F^{T} E^{-1} F]^{-1}
    \end{pmatrix}
  \end{equation*}
  and
  \begin{align*}
    \varphi_{k} = \, & F^T \left[E^{-1} + E^{-1} F [G - F^{T} E^{-1} F]^{-1} F^{T} E^{-1} \right]
    F 
    -[G - F^{T} E^{-1} F]^{-1} F^{T} E^{-1}
    F \\
    & - F^T E^{-1} F [G - F^{T} E^{-1} F]^{-1} + [G - F^{T} E^{-1} F]^{-1},
  \end{align*}
  which reduces to:
  \begin{align*}
    \varphi_{k} & = F^{T} E^{-1} F + \frac{\left( 1 - F^{T} E^{-1} F \right)^2}{\left[ x_{k-1} + 1 - F^{T} E^{-1} F \right]}
    = F^{T} E^{-1} F + \frac{\left(x_{k-1} + 1 - F^{T} E^{-1} F  - x_{k-1} \right)^2}{\left[ x_{k-1} + 1 - F^{T} E^{-1} F \right]} \\
    & = \left(- x_{k-1} + 1 \right) + \frac{x_{k-1}^2 }{\left[ x_{k-1} + 1 - F^{T} E^{-1} F \right]}.
  \end{align*}

  The scalar $F^{T}E^{-1}F$ is equal to $\alpha^2
  \varphi_{k-1}$. Indeed, $F$ is of size $(k-1) \times 1$ and equal to
  $\alpha V_{k-1}$, and $E$ is a $(k-1) \times (k-1)$ sub-matrix of
  the form $[A_{k-1}+U_{k-1}]$. Since $\alpha < 1$, the scalar
  $F^{T}E^{-1}F$ is less than one
  and the denominator is a positive quantity. Therefore, with
  $x_{k-1}$ fixed, $\varphi_{k}$ is maximized when $F^{T}E^{-1}F$ is
  maximized. By the induction step, with a fixed $x_{k-1}$ the
  remaining $x_i$'s should be ``placed'' in decreasing order on the
  diagonal of $E$ --and $U$-- to maximize $\varphi_{k}$.
  
  $\bullet$ Now fix $\{x_i\}_{i=0}^{k-3}$. We prove that with $x_{k-2}
  \le x_{k-1}$, $U = diag\{x_0, x_1, \cdots, x_{k-3}, x_{k-1},
  x_{k-2}\}$ gives us a larger value for $\varphi_{k}$.  To do this,
  we consider a different decomposition of the matrix $[A_{k} +
  U_{k}]$,
  \begin{equation*}
    A_{k} + U_{k} = \begin{pmatrix}
      E & F \\
      F^{T} & G
    \end{pmatrix}
  \end{equation*}
  where now
  \begin{equation*}
    E = \begin{pmatrix}
      x_0+1 & \alpha & \cdots & \alpha^{k-3} \\
      \alpha & x_1+1 & \cdots & \alpha^{k-4} \\
      \vdots & \vdots & \ddots & \vdots \\
      \alpha^{k-3} & \alpha^{k-4} & \cdots & x_{k-3}+1
    \end{pmatrix}
    \, F = \begin{pmatrix}
      \alpha^{k-2} & \alpha^{k-1} \\
      \alpha^{k-3} & \alpha^{k-2} \\
      \vdots & \vdots \\
      \alpha & \alpha^2
    \end{pmatrix}
    \, \& \, 
    G = \begin{pmatrix}
      x_{k-2} + 1 & \alpha \\
      \alpha & x_{k-1} + 1
    \end{pmatrix}.
  \end{equation*}

  Since $F = \left( \alpha V_{k-2} \quad \alpha^2 V_{k-2} \right)$,
  \begin{align*}
    \varphi_{k} = \, & \alpha^4 V_{k-2}^T \biggl[ E^{-1} + E^{-1} F \left[G - F^{T} E^{-1} F \right]^{-1} 
    F^{T} E^{-1} \biggr] V_{k-2} \nonumber \\
    & - 2 \alpha^2 \begin{pmatrix}
      \alpha & 1
    \end{pmatrix}
    \left[ G - F^{T} E^{-1} F \right]^{-1} F^{T} E^{-1} V_{k-2} 
    + \begin{pmatrix}
      \alpha & 1
    \end{pmatrix}
    \left[ G - F^{T} E^{-1} F \right]^{-1} 
    \begin{pmatrix}
      \alpha \\
      1
    \end{pmatrix}
  \end{align*}

  Noting that $\varphi_{k-2} = V_{k-2}^T E^{-1} V_{k-2}$,
  \begin{align*}
    \varphi_{k} = \, & \alpha^4 \varphi_{k-2} + \alpha^6 \varphi_{k-2}^2
    \begin{pmatrix}
      1 & \alpha
    \end{pmatrix}
    \left[ G - F^{T} E^{-1} F \right]^{-1}
    \begin{pmatrix}
      1 \\ \alpha
    \end{pmatrix} \\
    & - 2 \alpha^3 \varphi_{k-2}
    \begin{pmatrix}
      \alpha & 1
    \end{pmatrix}
    \left[ G - F^{T} E^{-1} F \right]^{-1}
    \begin{pmatrix}
      1 \\ \alpha
    \end{pmatrix}
    + \begin{pmatrix}
      \alpha & 1
    \end{pmatrix}
    \left[ G - F^{T} E^{-1} F \right]^{-1}
    \begin{pmatrix}
      \alpha \\ 1
    \end{pmatrix} \\
    = \, & \alpha^4 \varphi_{k-2} + 
    \underbrace{ \begin{pmatrix}
        \alpha(1-\alpha^2\varphi_{k-2}) & 1-\alpha^4\varphi_{k-2}
      \end{pmatrix} \left[ G - F^{T} E^{-1} F \right]^{-1}
      \begin{pmatrix}
        \alpha(1-\alpha^2\varphi_{k-2}) \\
        1-\alpha^4\varphi_{k-2}
      \end{pmatrix} }_{\xi \, = \, \xi(x_{k-2}, x_{k-1})}
  \end{align*}
  
  Examining $\xi$,
  \begin{align}
    \xi & = \begin{pmatrix}
      \alpha-\alpha^3\varphi_{k-2} & 1-\alpha^4\varphi_{k-2}
    \end{pmatrix} \hspace{-0.1cm}
    \begin{pmatrix}
      x_{k-2}+1-\alpha^2\varphi_{k-2} & \alpha- \alpha^3\varphi_{k-2} \\
      \alpha - \alpha^3\varphi_{k-2} & x_{k-1}+1-\alpha^4\varphi_{k-2}
    \end{pmatrix}^{-1} \hspace{-0.1cm}
    \begin{pmatrix}
      \alpha-\alpha^3\varphi_{k-2} \\
      1-\alpha^4\varphi_{k-2}
    \end{pmatrix} \nonumber \\
    & = \frac{(\alpha-\alpha^3\varphi_{k-2})^2 x_{k-1} + (1-\alpha^4\varphi_{k-2})^2 
      x_{k-2} + (1-\alpha^2\varphi_{k-2})(1-\alpha^4\varphi_{k-2}) (1 - \alpha^2) }
    {x_{k-2}x_{k-1} + (1-\alpha^4\varphi_{k-2})x_{k-2} + (1-\alpha^2\varphi_{k-2})x_{k-1} + (1-\alpha^2\varphi_{k-2})(1- \alpha^2) }. 
    \label{eq:xid}
  \end{align}

  Checking the two possibilities, $\xi(x_{k-2}, x_{k-1}) - \xi(x_{k-1},
  x_{k-2})$ has the same sign as
  \begin{multline*}
    \left[(\alpha-\alpha^3\varphi_{k-2})^2 x_{k-1} + (1-\alpha^4\varphi_{k-2})^2 x_{k-2} 
      + (1-\alpha^2\varphi_{k-2})(1-\alpha^4\varphi_{k-2}) (1 - \alpha^2) \right] \\
    \left[ x_{k-2}x_{k-1} + (1-\alpha^4\varphi_{k-2})x_{k-1} + (1-\alpha^2\varphi_{k-2})x_{k-2} + (1-\alpha^2\varphi_{k-2})(1- \alpha^2)\right] \\
    - \left[ (\alpha-\alpha^3\varphi_{k-2})^2 x_{k-2} + (1-\alpha^4\varphi_{k-2})^2 x_{k-1} 
      + (1-\alpha^2\varphi_{k-2})(1-\alpha^4\varphi_{k-2}) (1 - \alpha^2) \right] \\
    \left[ x_{k-2}x_{k-1} + (1-\alpha^4\varphi_{k-2})x_{k-2} + (1-\alpha^2\varphi_{k-2})x_{k-1} + (1-\alpha^2\varphi_{k-2})(1- \alpha^2)\right],
  \end{multline*}
  which is zero if $\alpha = 1$ or $x_{k-2} = x_{k-1}$. Assuming $x_{k-2} <
  x_{k-1}$, it is of the same sign as
  \begin{multline*}
    - ( 1 - \alpha^6\varphi_{k-2}^2 ) x_{k-2}x_{k-1}
    - (1-\alpha^4\varphi_{k-2}) (1-\alpha^2\varphi_{k-2}) (x_{k-1} + x_{k-2})
    - (1-\alpha^2\varphi_{k-2})^2(1 - \alpha^2),
  \end{multline*}
  which is negative and hence $\xi(x_{k-2}, x_{k-1}) < \xi(x_{k-1},
  x_{k-2})$. We conclude that, when fixing $\{x_i\}_{i=0}^{k-3}$,
  $\varphi_{k}$ is maximized when the last two diagonal elements
  $x_{k-2}$ and $x_{k-1}$ are placed in non-increasing order.

  The final step in the proof is to note that if the diagonal entries
  are not in non-increasing order, then either the first $(k-1)$
  entries are not or the last two entries are not. This contradicts
  the previous two properties.
\end{proof}

Before we state and prove the theorem, a couple of quantities that
will come in handy hereafter are the partial derivatives of
$\xi(x_{k-2},x_{k-1})$ defined in~(\ref{eq:xid}):
\begin{align}
  \frac{\partial }{\partial x_{k-2}} \, \xi \quad & \propto \quad - \alpha^2 (1-\alpha^2\varphi_{k-2})^2 x_{k-1}^2 \label{eq:der1}\\
  \frac{\partial }{\partial x_{k-1}} \, \xi \quad & \propto \quad - \left[ (1-\alpha^2\varphi_{k-2}) (1- \alpha^2) \label{eq:der2}
    + (1-\alpha^4\varphi_{k-2}) x_{k-2} \right]^2,
\end{align}
where the expressions above are those of the respective numerators. We
note that both quantities are non-positive and everything else being
constant, the value of $\xi$ decreases as $x_{k-2}$ or $x_{k-1}$
increases.

\begin{theorem*}
  When maximizing the scalar $\phi$ over all the choices of
  $\{P_j\}_{0}^{k-1}$ such that $\displaystyle \sum_{j=0}^{k-1} P_j =
  k P_{tr}$, the maximum is achieved when all the available power is
  allocated to the last pilot, i.e., $P_j = 0$, for all $0 \le j \le
  (k-2)$ and $P_{k-1} = k P_{tr}$.
\end{theorem*}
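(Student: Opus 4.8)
The plan is to first recast the quadratic form $\phi$ into the shape covered by Lemma~\ref{lem:one}. Since $D$ is diagonal ($D^{T}=D$), writing $D A D^{T} + \sigma_N^2 I = D\,(A + \sigma_N^2 D^{-2})\,D$ and pulling the two factors of $D$ through the surrounding $V^{T}D$ and $DV$ gives, whenever all powers are strictly positive,
$$\phi = V^{T}\left[A + U\right]^{-1}V = \varphi_k, \qquad U = \sigma_N^2 D^{-2} = \mathrm{diag}\!\left(\tfrac{\sigma_N^2}{P_0},\dots,\tfrac{\sigma_N^2}{P_{k-1}}\right).$$
Thus the variable $x_i$ of Lemma~\ref{lem:one} is exactly $x_i = \sigma_N^2/P_i$, and the budget $\sum_j P_j = kP_{tr}$ becomes a constraint on the $x_i$'s. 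Because $\phi$ is continuous in $(P_0,\dots,P_{k-1})$ up to and including the faces $P_i=0$ (where $x_i\to\infty$), a maximizer exists on the closed simplex, so it suffices to argue on the positive orthant and pass to the limit.

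I would then induct on $k$. The base case $k=1$ is immediate, the single (hence last) pilot carrying all the power. For the inductive step I would start from an optimal allocation and invoke Lemma~\ref{lem:one}: replacing the $x_i$'s by their non-increasing rearrangement can only increase $\varphi_k$ and preserves the budget, so I may assume the optimizer is sorted, $P_0\le P_1\le\cdots\le P_{k-1}$ (equivalently $x_0\ge\cdots\ge x_{k-1}$). The goal then collapses to showing that the optimum forces $P_{k-2}=0$: together with sortedness this immediately yields $P_0=\cdots=P_{k-2}=0$ and $P_{k-1}=kP_{tr}$.

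To force $P_{k-2}=0$ I would use the two-block decomposition already derived for Lemma~\ref{lem:one}, namely $\varphi_k = \alpha^4\varphi_{k-2} + \xi(x_{k-2},x_{k-1})$, in which $\varphi_{k-2}$ depends only on $P_0,\dots,P_{k-3}$. Freezing those powers and the partial sum $Q=P_{k-2}+P_{k-1}$ leaves $\varphi_{k-2}$ constant, so maximizing $\varphi_k$ reduces to maximizing $\xi$ along $x_{k-2}=\sigma_N^2/P_{k-2}$, $x_{k-1}=\sigma_N^2/(Q-P_{k-2})$. Feeding the partials~(\ref{eq:der1}) and~(\ref{eq:der2}) into the chain rule and clearing the (positive) common denominator, the sign of $\mathrm{d}\xi/\mathrm{d}P_{k-2}$ equals that of
$$\left(\alpha^2 a^2 - b^2\right)x_{k-2}^2 - 2ab\,(1-\alpha^2)\,x_{k-2} - a^2(1-\alpha^2)^2, \qquad a = 1-\alpha^2\varphi_{k-2},\ \ b = 1-\alpha^4\varphi_{k-2}.$$
Since $0<\varphi_{k-2}<1$ gives $0<a\le b$ and $\alpha<1$ gives $\alpha^2a^2<a^2\le b^2$, every term is non-positive and the constant term is strictly negative, so $\xi$ is strictly decreasing in $P_{k-2}$. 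Hence any optimizer must have $P_{k-2}=0$, closing the induction.

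The main obstacle is precisely the sign verification in the last display: the competing partials~(\ref{eq:der1})--(\ref{eq:der2}) are \emph{both} non-positive, so neither ``push all power onto $x_{k-1}$'' nor ``push it onto $x_{k-2}$'' is favoured termwise, and only after combining them through the budget constraint does the strict inequality $\alpha^2a^2<b^2$ (equivalently $\alpha<1$ together with $b\ge a$) tip the balance toward the last pilot. The secondary subtlety is the bookkeeping at the boundary $P_i=0$, where the reduction $\phi=\varphi_k$ must be read as a limit; this is also where Lemma~\ref{lem:one} earns its keep, since it is what upgrades the single conclusion $P_{k-2}=0$ into the vanishing of \emph{all} the earlier powers.
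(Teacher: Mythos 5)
Your proposal is correct and takes essentially the same route as the paper's own proof: rewrite $\phi = V^T\left[A+\sigma_N^2 D^{-1}D^{-1}\right]^{-1}V$, invoke Lemma~\ref{lem:one} to assume the powers are sorted, then combine the partials~(\ref{eq:der1})--(\ref{eq:der2}) through the budget constraint to show $\xi$ is strictly decreasing in $P_{k-2}$, forcing $P_{k-2}=0$ and hence, by sortedness, $P_0=\cdots=P_{k-2}=0$; your quadratic sign criterion $(\alpha^2a^2-b^2)x_{k-2}^2 - 2ab(1-\alpha^2)x_{k-2}-a^2(1-\alpha^2)^2$ is exactly the paper's condition written as a difference of squares, where the paper instead compares the square roots $\alpha a$ versus $b + a(1-\alpha^2)(\epsilon+p)/\sigma_N^2$. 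The only cosmetic differences are that the paper handles the boundary $P_i=0$ by an explicit regularization ($P_j=\epsilon+P'_j$ with $\epsilon\to 0$) where you appeal to continuity and compactness of the simplex, and your outer induction on $k$ is never actually needed, since sortedness plus $P_{k-2}=0$ already closes the argument for each fixed $k$.
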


\begin{proof}
  We start by imposing a lower bound on the powers $\{P_j\}$'s. More
  precisely, for some small enough $\epsilon > 0$, we assume that $P_j
  = \epsilon + P'_j$ and
  \begin{equation*}
    D = \begin{pmatrix}
      \sqrt{\epsilon + P'_{0}} & 0 & \cdots & 0 \\
      0 & \sqrt{\epsilon + P'_{1}} & \cdots & 0 \\
      \vdots & \vdots & \ddots & \vdots \\
      0 & 0 & \cdots & \sqrt{\epsilon + P'_{k-1}}
    \end{pmatrix},
  \end{equation*}
  and we optimize over the $\{P'_j\}$'s subject to the constraint
  \begin{equation}
    \label{eqn:ubound}
    \sum_{j=0}^{k-1} P'_j \le k P_{tr} - k \epsilon.
  \end{equation}
  
  The diagonal matrix $D$ is non-singular, allowing us to express the
  objective function $\phi$ as:
  \begin{align*}
    \phi = V^T \left[ A + \sigma_N^{2} D^{-1} D^{-1} \right]^{-1} V.
  \end{align*}
  
  Applying the result of Lemma~\ref{lem:one} with $U = \sigma_N^{2}
  D^{-1}D^{-1}$ --and diagonal entries $x_i = \sigma_N^{2}
  \frac{1}{\epsilon + P'_i}$, yields that the optimal $\{ P'_j\}$'s
  have to be non-decreasing. Additionally, the
  derivative~(\ref{eq:der2}) indicates
  that the upperbound~(\ref{eqn:ubound}) will be tight. Indeed, fixing
  $\{P'_0, \cdots, P'_{k-2}\}$ (or equivalently $\{x_0, \cdots,
  x_{k-2}\}$) and increasing $P'_{k-1}$ (or equivalently decreasing
  $x_{k-1}$) will increase $\phi (=\varphi_{k})$. This asserts that
  the power on the last pilot should be as large as possible so that
  the upper bound~(\ref{eqn:ubound}) is met with equality.

  The derivatives~(\ref{eq:der1}) and~(\ref{eq:der2}) allow us to make
  an even stronger statement: If $\{P'_0, \cdots, P'_{k-3}\}$ are
  fixed, among the choices of $P'_{k-2}$ and $P'_{k-1}$ such that
  \[
  P'_{k-2} + P'_{k-1} \leq  k P_{tr} - k \epsilon - \sum_{j=0}^{k-3} P'_j \eqdef M,
  \]
  the one that maximizes $\varphi_{k}$ is $P'_{k-2} = 0$ and $P'_{k-1} =
  M$.

  %
  
  Indeed, since the bound will be met with equality and $P'_{k-2}$ is
  less or equal to $P'_{k-1}$ (by the result of Lemma~\ref{lem:one}),
  we let $P'_{k-2} = p$ and $P'_{k-1} = M - p$ and optimize over $p
  \in [0,M/2]$. Equivalently, $x_{k-2} = \sigma_N^{2}
  \frac{1}{\epsilon + p}$, $x_{k-1} = \sigma_N^{2} \frac{1}{\epsilon +
    M - p}$, and since $\varphi_{k-2}$ is fixed, the derivative of
  $\varphi_{k}$ with respect to $p$ is
  \begin{align*}
    \frac{d}{dp} \varphi_{k} & = \frac{\partial \xi}{\partial x_{k-2}} \cdot \frac{d x_{k-2}}{dp} + \frac{\partial \xi}{\partial x_{k-1}} \cdot \frac{d x_{k-1}}{dp} \\
    & \propto \frac{\alpha^2 (1-\alpha^2\varphi_{k-2})^2} {(\epsilon + p)^2 (\epsilon + M - p)^2} - \frac{\left[ (1-\alpha^2\varphi_{k-2}) (1- \alpha^2) 
        (\epsilon + p) / \sigma_N^{2} + (1-\alpha^4\varphi_{k-2}) \right]^2}{(\epsilon + p)^2 (\epsilon + M - p)^2 },
  \end{align*}
  which is of the same sign as 
  \begin{align*}
    & \alpha (1-\alpha^2\varphi_{k-2}) - \left[ (1-\alpha^2\varphi_{k-2}) (1- \alpha^2) 
      (\epsilon + p) / \sigma_N^{2} + (1-\alpha^4\varphi_{k-2}) \right] \\
    \leq & \, \alpha (1-\alpha^2\varphi_{k-2}) - (1-\alpha^4\varphi_{k-2})
    =  \, (-1 - \alpha^3 \varphi_{k-2}) ( 1 - \alpha) \leq 0,
  \end{align*}
  for any $p$ and therefore the maximum is attained when $p = 0$.
  Said differently, $\varphi_{k}$ is maximum when $P'_{k-2} = 0$ and
  $P'_{k-1} = M$.

  
  By Lemma~\ref{lem:one}, the optimal values of $P'_i$ for all $i \in
  \{0, 1, \cdots, k-3\}$ are less or equal to $P'_{k-2}$. Since for an
  optimal power allocation $P'_{k-2}$ is zero, $P'_i = 0$ for all $i
  \in \{0, 1, \cdots, k-2\}$ and $P'_{k-1} = k P_{tr} - k \epsilon$.

  Finally, the same previous observations show that the smaller the
  $\epsilon$ the larger $\phi$ is. Consequently, taking the limit as
  $\epsilon$ goes to zero yields the optimal solution and the proof of
  the theorem is complete.
\end{proof}

\bibliographystyle{IEEEtran}
\bibliography{biblio}

\begin{thebibliography}{10}
\providecommand{\url}[1]{#1}
\csname url@samestyle\endcsname
\providecommand{\newblock}{\relax}
\providecommand{\bibinfo}[2]{#2}
\providecommand{\BIBentrySTDinterwordspacing}{\spaceskip=0pt\relax}
\providecommand{\BIBentryALTinterwordstretchfactor}{4}
\providecommand{\BIBentryALTinterwordspacing}{\spaceskip=\fontdimen2\font plus
\BIBentryALTinterwordstretchfactor\fontdimen3\font minus
  \fontdimen4\font\relax}
\providecommand{\BIBforeignlanguage}[2]{{%
\expandafter\ifx\csname l@#1\endcsname\relax
\typeout{** WARNING: IEEEtran.bst: No hyphenation pattern has been}%
\typeout{** loaded for the language `#1'. Using the pattern for}%
\typeout{** the default language instead.}%
\else
\language=\csname l@#1\endcsname
\fi
#2}}
\providecommand{\BIBdecl}{\relax}
\BIBdecl

\bibitem{chester1}
A.~Lapidoth and S.~Shamai, ``Fading channels: how perfect need \"perfect side
  information\" be?'' \emph{IEEE Trans. on Inf. Theory}, vol.~48, no.~5, pp.
  1118--1134, May 2002.

\bibitem{chester2}
A.~Vakili, M.~Sharif, and B.~Hassibi, ``The effect of channel estimation error
  on the throughput of broadcast channels,'' in \emph{IEEE Intl. Conf.
  Acoustics, Speech \& Sig. Processing, ICASSP}, vol.~4, May 2006, pp. IV--IV.

\bibitem{chester3}
J.~Wang, M.~Li, Y.~Zhang, and Q.~Zhou, ``Effect of channel estimation error on
  the mutual information of mimo fading channels,'' in \emph{Intl. Conf.
  Wireless Comm., Networking \& Mobile Comp., WiCOM}, Oct 2008, pp. 1--4.

\bibitem{chester4}
A.~Lozano, R.~Heath, and J.~Andrews, ``Fundamental limits of cooperation,''
  \emph{IEEE Trans. on Inf. Theory}, vol.~59, no.~9, pp. 5213--5226, Sept 2013.

\bibitem{Cav91}
J.~K. Cavers, ``An {A}nalysis of {P}ilot {S}ymbol {A}ssisted {M}odulation for
  {R}ayleigh {F}ading {C}hannels,'' \emph{IEEE Trans. on Veh. Technol.},
  vol.~40, no.~11, pp. 686--693, Nov. 1991.

\bibitem{Gold97}
A.~J. Goldsmith and S.~G. Chua, ``Variable-rate variable-power {MQAM} for
  fading channels,'' \emph{IEEE Trans. on Comm.}, vol.~45, no.~10, pp.
  1218--1230, Oct. 1997.

\bibitem{Cai05}
X.~Cai and G.~B. Giannakis, ``Adaptive {PSAM} {A}ccounting for {C}hannel
  {E}stimation and {P}rediction {E}rrors,'' \emph{IEEE Trans. on Wireless
  Comm.}, vol.~4, no.~1, pp. 246--256, Jan. 2005.

\bibitem{ref1}
T.~A. Lamahewa, P.~Sadeghi, R.~Kennedy, and P.~Rapajic, ``Model-based pilot and
  data power adaptation in psam with periodic delayed feedback,'' \emph{IEEE
  Trans. on Wireless Comm.}, vol.~8, no.~5, pp. 2247--2252, May 2009.

\bibitem{ref3}
M.~Agarwal, M.~Honig, and B.~Ata, ``Adaptive training for correlated fading
  channels with feedback,'' \emph{IEEE Trans. on Inf. Theory}, vol.~58, no.~8,
  pp. 5398--5417, Aug 2012.

\bibitem{ref18}
X.~Cai and G.~Giannakis, ``Adaptive psam accounting for channel estimation and
  prediction errors,'' \emph{IEEE Trans. on Wireless Comm.}, vol.~4, no.~1, pp.
  246--256, Jan 2005.

\bibitem{ref16}
A.~Ekpenyong and Y.-F. Huang, ``Feedback constraints for adaptive
  transmission,'' \emph{IEEE Sig. Processing Mag.}, vol.~24, no.~3, pp. 69--78,
  May 2007.

\bibitem{ref4}
S.~Misra, A.~Swami, and L.~Tong, ``Cutoff rate optimal binary inputs with
  imperfect csi,'' \emph{IEEE Trans. on Wireless Comm.}, vol.~5, no.~10, pp.
  2903--2913, Oct 2006.

\bibitem{ref6}
S.~Akin and M.~Gursoy, ``Achievable rates and training optimization for fading
  relay channels with memory,'' in \emph{Conf. Inf. Sc. \& Sys, CISS}, March
  2008, pp. 185--190.

\bibitem{ref9}
K.~Almustafa, S.~Primak, T.~Willink, and K.~Baddour, ``On achievable data rates
  and optimal power allocation in fading channels with imperfect csi,'' in
  \emph{Intl. Symp. Wireless Comm. Sys., ISWCS}, Oct 2007, pp. 282--286.

\bibitem{ref10}
S.~Akin and M.~Gursoy, ``Training optimization for gauss-markov rayleigh fading
  channels,'' in \emph{IEEE Intl. Conf. on Comm., ICC}, June 2007, pp.
  5999--6004.

\bibitem{ref11}
S.~Savazzi and U.~Spagnolini, ``Optimizing training lengths and training
  intervals in time-varying fading channels,'' \emph{IEEE Trans. on Sig.
  Processing}, vol.~57, no.~3, pp. 1098--1112, March 2009.

\bibitem{ref13}
H.~Zhang, S.~Wei, G.~Ananthaswamy, and D.~Goeckel, ``Adaptive signaling based
  on statistical characterizations of outdated feedback in wireless
  communications,'' \emph{Proc. of the IEEE}, vol.~95, no.~12, pp. 2337--2353,
  Dec 2007.

\bibitem{ref15}
D.~Duong, B.~Holter, and G.~Oien, ``Optimal pilot spacing and power in
  rate-adaptive mimo diversity systems with imperfect transmitter csi,'' in
  \emph{IEEE Workshop on Sig. Processing Adv. in Wireless Comm.}, June 2005,
  pp. 47--51.

\bibitem{ref19}
A.~Maaref and S.~Aissa, ``Optimized rate-adaptive psam for mimo mrc systems
  with transmit and receive csi imperfections,'' \emph{IEEE on Trans. Comm.},
  vol.~57, no.~3, pp. 821--830, March 2009.

\bibitem{AMM05}
I.~Abou-Faycal, M.~M\'edard, and U.~Madhow, ``Binary {A}daptive {C}oded {P}ilot
  {S}ymbol {A}ssisted {M}odulation over {R}ayleigh {F}ading {C}hannels without
  {F}eedback,'' \emph{IEEE Trans. on Comm.}, vol.~53, no.~6, pp. 1036--1046,
  June 2005.

\bibitem{ZAF09}
K.~Zeineddine and I.~Abou-Faycal, ``How {M}uch {T}raining is {O}ptimal in
  {A}daptive {PSAM} over {M}arkov {R}ayleigh {F}ading {C}hannels,'' \emph{IEEE
  Intl. Symp. Sig. Processing \& Inf. Technology, ISSPIT}, pp. 366 --371, Dec.
  2009.

\bibitem{Med00}
M.~M\'edard, ``The {E}ffect upon {C}hannel {C}apacity in {W}ireless
  {C}ommunications of {P}erfect and {I}mperfect {K}nowledge of the {C}hannel,''
  \emph{IEEE Trans. on Inf. Theory}, vol.~46, no.~3, pp. 935--946, March 2000.

\bibitem{JakesBook}
W.~C. Jakes, \emph{Microwave {M}obile {C}ommunications}.\hskip 1em plus 0.5em
  minus 0.4em\relax New York: Wiley, 1974.

\bibitem{ATS01}
I.~Abou-Faycal, M.~D. Trott, and S.~Shamai, ``The {C}apacity of
  {D}iscrete-{T}ime {M}emoryless {R}ayleigh-{F}ading {C}hannels,'' \emph{IEEE
  Trans. on Inf. Theory}, vol.~47, no.~4, pp. 1290--1301, May 2001.

\bibitem{Gall87}
R.~Gallager, ``{P}ower {L}imited {C}hannels: {C}oding, {M}ultiaccess, and
  {S}pread {S}pectrum,'' \emph{MIT LIDS}, Nov. 1987.

\bibitem{Verdu90}
S.~Verdu, ``On {C}hannel {C}apacity per {u}nit {c}ost,'' \emph{IEEE Trans. on
  Inf. Theory}, vol.~36, no.~5, pp. 1019--1030, Sep. 1990.

\bibitem{Cheng06}
C.~Luo, ``Communication for {W}ideband {F}ading {C}hannels: on {T}heory and
  {P}ractice,'' Ph.D. dissertation, Massachusetts Institute of Technology, Feb.
  2006.

\bibitem{Baddour05}
K.~E. Baddour and N.~C. Beaulieu, ``Autoregressive {M}odeling for {F}ading
  {C}hannel {S}imulation,'' \emph{IEEE Trans. on Wireless Comm.}, vol.~4,
  no.~4, pp. 1650--1662, July 2005.

\bibitem{BAM04}
A.~Bdeir, I.~Abou-Faycal, and M.~M\'edard, ``Power {A}llocation {S}chemes for
  {P}ilot {S}ymbol {A}ssisted {M}odulation over {R}ayleigh {F}ading {C}hannels
  with no {F}eedback,'' \emph{IEEE Int. Conf. Comm., ICC}, vol.~2, pp.
  737--741, Jun. 2004.

\bibitem{matrix}
K.~B. Petersen and M.~S. Pedersen, ``The matrix cookbook,'' Nov 2012, version
  20121115.

\end{thebibliography}

\end{document}